\documentclass[11pt]{article}%
\usepackage{amsfonts}
\usepackage{amsmath}
\usepackage{fullpage}
\usepackage{amssymb}
\usepackage{graphicx}
\usepackage{hyperref}%
\setcounter{MaxMatrixCols}{30}
%TCIDATA{OutputFilter=latex2.dll}
%TCIDATA{Version=5.50.0.2953}
%TCIDATA{CSTFile=40 LaTeX article.cst}
%TCIDATA{Created=Thursday, December 04, 2014 20:17:41}
%TCIDATA{LastRevised=Monday, June 01, 2015 17:55:34}
%TCIDATA{<META NAME="GraphicsSave" CONTENT="32">}
%TCIDATA{<META NAME="SaveForMode" CONTENT="1">}
%TCIDATA{BibliographyScheme=Manual}
%TCIDATA{<META NAME="DocumentShell" CONTENT="Standard LaTeX\Standard LaTeX Article">}
%BeginMSIPreambleData
\providecommand{\U}[1]{\protect\rule{.1in}{.1in}}
%EndMSIPreambleData
\pdfoutput=1
\newtheorem{theorem}{Theorem}

\newtheorem{corollary}[theorem]{Corollary}

\newtheorem{lemma}[theorem]{Lemma}

\newtheorem{remark}[theorem]{Remark}

\newenvironment{proof}[1][Proof]{\noindent\textbf{#1.} }{\ \rule{0.5em}{0.5em}}
\numberwithin{equation}{section}
\newcommand*{\id}
{\mathrm{id}}
\newcommand*{\tr}
{\mathrm{tr}}

\begin{document}

\title{Monotonicity of quantum relative entropy and recoverability}
\author{Mario Berta\thanks{Institute for Quantum Information and Matter, California
Institute of Technology, Pasadena, California 91125, USA}
\and Marius Lemm\thanks{Mathematics Department, California Institute of Technology,
Pasadena, California 91125, USA}
\and Mark M. Wilde\thanks{Hearne Institute for Theoretical Physics, Department of
Physics and Astronomy, Center for Computation and Technology, Louisiana State
University, Baton Rouge, Louisiana 70808, USA}}
\maketitle

\begin{abstract}
The relative entropy is a principal measure of distinguishability in quantum
information theory, with its most important property being that it is
non-increasing with respect to noisy quantum operations. Here, we establish a remainder
term for this inequality that quantifies how well one can recover from a loss
of information by employing a rotated Petz recovery map. The main approach for
proving this refinement is to combine the methods of [Fawzi and Renner,
arXiv:1410.0664] with the notion of a relative typical subspace from
[Bjelakovic and Siegmund-Schultze, arXiv:quant-ph/0307170]. Our paper constitutes partial progress towards a remainder term which
features just the Petz recovery map (not a rotated Petz map), a conjecture
which would have many consequences in quantum information
theory.

A well known result states that the monotonicity
of relative entropy with respect to quantum operations is equivalent to each of the
following inequalities: strong subadditivity of entropy, concavity of
conditional entropy, joint convexity of relative entropy, and monotonicity of
relative entropy with respect to partial trace. We show that this equivalence holds true
for refinements of all these inequalities in terms of the Petz recovery map.
So either all of these refinements are true or all are false.

\end{abstract}

\section{Introduction}

The Umegaki relative entropy $D\left(  \rho\Vert\sigma\right)$ between a
density operator\footnote{Recall that a density operator is a positive
semi-definite operator with trace equal to one. Throughout this paper,
sometimes our statements apply only to positive definite density operators,
and we make it clear when this is so.} $\rho$ and a positive semi-definite
operator $\sigma$ is defined as
 Tr$\left\{  \rho\left[  \log\rho-\log\sigma\right]  \right\}  $ whenever $\operatorname{supp}(\rho)
 \subseteq 
 \operatorname{supp}(\sigma)$ and as $+\infty$ otherwise. It is a fundamental information measure in quantum information
theory \cite{U62}, from which many other information measures, such as
entropy, conditional entropy, and mutual information, can be derived (see,
e.g., \cite{BSW14}). When $\sigma$ is a density operator, the relative entropy
is a measure of statistical distinguishability and receives an operational
interpretation in the context of asymmetric quantum hypothesis testing (known
as the quantum Stein's lemma) \cite{HP91,NO00}. Being a good measure of
distinguishability, the relative entropy does not increase with respect to quantum
processing, as is captured in the following inequality, known as monotonicity
of relative entropy \cite{Lindblad1975,U77}:%
\begin{equation}
D\left(  \rho\Vert\sigma\right)  \geq D\left(  \mathcal{N}\left(  \rho\right)
\Vert\mathcal{N}\left(  \sigma\right)  \right)  ,\label{eq:mono-rel-ent}%
\end{equation}
where $\mathcal{N}$ is a linear completely positive trace preserving (CPTP)\ map
(also referred to as a quantum channel). The inequality is known to be
saturated if and only if the following Petz recovery map perfectly recovers
$\rho$ from $\mathcal{N}\left(  \rho\right)  $ \cite{Petz1986,Petz1988} (see
also \cite{HJPW04}):%
\begin{equation}
\mathcal{R}_{\sigma,\mathcal{N}}^{P}\left(  \cdot\right)  \equiv\sigma
^{1/2}\mathcal{N}^{\dag}\left[  \left(  \mathcal{N}\left(  \sigma\right)
\right)  ^{-1/2}\left(  \cdot\right)  \left(  \mathcal{N}\left(
\sigma\right)  \right)  ^{-1/2}\right]  \sigma^{1/2},\label{eq:Petz-map}%
\end{equation}
with $\mathcal{N}^{\dag}$ the adjoint of $\mathcal{N}$. (Observe that the Petz
recovery map always perfectly recovers $\sigma$ from $\mathcal{N}\left(
\sigma\right)  $.) There are several related inequalities, which are known to
be equivalent\footnote{The notion that two statements which are known to be true are
`equivalent' of course does not make strict sense logically. So when we
say that `$A$ is equivalent to $B$' for two statements $A$ and $B$ which are
already known to be true (for us $A$ and $B$ will always be some kind of
entropy inequalities), we in fact mean the softer (but standard) notion
that, if one assumes $A$, then there exists a relatively direct proof for $B$ and vice versa.} to (\ref{eq:mono-rel-ent}) when $\sigma$ is a density operator (see,
e.g., \cite{R02}). One equivalent inequality is the monotonicity of relative
entropy with respect to partial trace:%
\begin{equation}
D\left(  \rho_{AB}\Vert\sigma_{AB}\right)  \geq D\left(  \rho_{B}\Vert
\sigma_{B}\right)  ,\label{eq:mono-partial}%
\end{equation}
where $\rho_{AB}$ and $\sigma_{AB}$ are density operators acting on a
tensor-product Hilbert space $\mathcal{H}_{A}\otimes\mathcal{H}_{B}$. The
operators $\rho_{B}$ and $\sigma_{B}$ result from the partial trace:\ $\rho
_{B}\equiv\operatorname{Tr}_{A}\left\{  \rho_{AB}\right\}  $ and $\sigma
_{B}\equiv\operatorname{Tr}_{A}\left\{  \sigma_{AB}\right\}  $. Another
equivalent inequality is the joint convexity of relative entropy:%
\begin{equation}
\sum_{x}p_{X}\left(  x\right)  D\left(  \rho^{x}\Vert\sigma^{x}\right)  \geq
D\left(  \overline{\rho}\Vert\overline{\sigma}\right)
,\label{eq:joint-convex}%
\end{equation}
where $p_{X}$ is a probability distribution, $\left\{  \rho^{x}\right\}  $ and
$\left\{  \sigma^{x}\right\}  $ are sets of density operators, $\overline
{\rho}\equiv\sum_{x}p_{X}\left(  x\right)  \rho^{x}$, and $\overline{\sigma
}\equiv\sum_{x}p_{X}\left(  x\right)  \sigma^{x}$. The interpretation of the
above inequality is that distinguishability cannot increase under the loss of
the classical label$~x$. One other equivalent inequality is the strong
subadditivity of quantum entropy \cite{PhysRevLett.30.434,LR73}:%
\begin{equation}
I\left(  A;B|C\right)  _{\omega}\equiv D\left(  \omega_{ABC}\Vert\omega
_{AC}\otimes I_{B}\right)  -D\left(  \omega_{BC}\Vert\omega_{C}\otimes
I_{B}\right)  \geq0,\label{eq:SSA}%
\end{equation}
which can be seen as a special case of (\ref{eq:mono-rel-ent}) with
$\rho=\omega_{ABC}$, $\sigma=\omega_{AC}\otimes I_{B}$, and $\mathcal{N}%
=\operatorname{Tr}_{A}$, where $\omega_{ABC}$ is a tripartite density operator
acting on the tensor-product Hilbert space $\mathcal{H}_{A}\otimes
\mathcal{H}_{B}\otimes\mathcal{H}_{C}$. A final equivalent inequality that we
mention is the concavity of conditional entropy \cite{LR73}:%
\begin{equation}
H\left(  A|B\right)  _{\overline{\rho}}\geq\sum_{x}p_{X}\left(  x\right)
H\left(  A|B\right)  _{\rho^{x}},\label{eq:concavity}%
\end{equation}
where $p_{X}$ is a probability distribution, $\{\rho_{AB}^{x}\}$ is a set of
density operators, $\overline{\rho}_{AB}\equiv\sum_{x}p_{X}\left(  x\right)
\rho_{AB}^{x}$, and the conditional entropy $H\left(  A|B\right)  _{\sigma
}\equiv-D\left(  \sigma_{AB}\Vert I_{A}\otimes\sigma_{B}\right)  $.

The above inequalities have been critical to the development of quantum
information theory. In fact, since so much of quantum information theory
relies on these inequalities and given that they are equivalent and apply
universally for any states and channels, they are often considered to
constitute a fundamental law of quantum information theory. In light of this,
we might wonder if there could be refinements of the above inequalities in the
form of \textquotedblleft remainder terms.\textquotedblright\ While a number
of works pursued this direction
\cite{BCY11,Winterconj,K13conj,LW14a,CL14,Z14,Z14b,BSW14,SBW14,LW14,SW14}, a
breakthrough paper established the following remainder term for strong
subadditivity \cite{FR14}:%
\begin{equation}
I\left(  A;B|C\right)  _{\omega} \geq-\log F\left(  \omega_{ABC},\left(
\mathcal{V}_{AC}\circ\mathcal{R}_{C\rightarrow AC}^{P}\circ\mathcal{U}%
_{C}\right)  \left(  \omega_{BC}\right)  \right)  , \label{eq:FR}%
\end{equation}
where $F\left(  \tau,\varsigma\right)  \equiv\left\Vert \sqrt{\tau}%
\sqrt{\varsigma}\right\Vert _{1}^{2}$ is the quantum fidelity between positive
semi-definite operators $\tau$ and $\varsigma$ \cite{U73}, $\mathcal{U}_{C}$
and $\mathcal{V}_{AC}$ are unitary channels defined in terms of some unitary
operators $U_{C}$ and $V_{AC}$ as%
\begin{align}
\mathcal{U}_{C}\left(  \cdot\right)   &  \equiv U_{C}\left(  \cdot\right)
U_{C}^{\dag},\\
\mathcal{V}_{AC}\left(  \cdot\right)   &  \equiv V_{AC}\left(  \cdot\right)
V_{AC}^{\dag},
\end{align}
and $\mathcal{R}_{C\rightarrow AC}^{P}$ is the following Petz recovery map:%
\begin{equation}
\mathcal{R}_{C\rightarrow AC}^{P}\left(  \cdot\right)  \equiv\omega_{AC}%
^{1/2}\omega_{C}^{-1/2}\left(  \cdot\right)  \omega_{C}^{-1/2}\omega
_{AC}^{1/2}.
\end{equation}

In the present paper, our first contribution is to combine the methods of
\cite{FR14}\ and the notion of a relative typical subspace from \cite[pages
4-5]{BS12}\ in order to establish the following remainder term for the
inequality in (\ref{eq:mono-rel-ent}):%
\begin{equation}
D\left(  \rho\Vert\sigma\right)  -D\left(  \mathcal{N}\left(  \rho\right)
\Vert\mathcal{N}\left(  \sigma\right)  \right)  \geq-\log F\left(
\rho,\left(  \mathcal{V\circ R}_{\sigma,\mathcal{N}}^{P}\circ\mathcal{U}%
\right)  \left(  \mathcal{N}\left(  \rho\right)  \right)  \right)  ,
\label{eq:refinement-ineq}%
\end{equation}
where $\mathcal{U}$ is a unitary channel acting on the output space of
$\mathcal{N}$, $\mathcal{R}_{\sigma,\mathcal{N}}^{P}$ is the Petz recovery map
defined in (\ref{eq:Petz-map}), and $\mathcal{V}$ is a unitary channel acting
on the input space of $\mathcal{N}$. Thus, the refinement in
(\ref{eq:refinement-ineq})\ quantifies how well one can recover $\rho$ from
$\mathcal{N}\left(  \rho\right)  $ by employing the \textquotedblleft rotated
Petz recovery map\textquotedblright\ $\mathcal{V\circ R}_{\sigma,\mathcal{N}%
}^{P}\circ\mathcal{U}$. This result is stated formally as
Corollary~\ref{cor:mono} and can be understood as a generalization of
(\ref{eq:FR}).
%(note that the inequality in \eqref{eq:FR} is a very special case of \eqref{eq:refinement-ineq} by making some particular choices for $\rho$, $\sigma$, and
%$\mathcal{N}$).
We establish a similar refinement of the inequality in
(\ref{eq:mono-partial}), stated formally as
Theorem~\ref{thm:mono-PT}. Given that the original inequalities without remainder terms
have found wide use in quantum information theory, we expect the refinements with
remainder terms presented here to find use in some applications of the
original inequalities, perhaps in the context of quantum error correction
\cite{Barnum2002,Schumacher2002,Tyson2010,Ng2010,Mandayam2012} or
thermodynamics \cite{RevModPhys.74.197,S12}. Note that the refinement in
(\ref{eq:FR}) has already been helpful in improving our understanding of some
quantum correlation measures \cite{Winterconj,LW14,SW14,W14}.
%It remains
%mostly open to quantify the performance of the rotated Petz recovery map
%$\mathcal{V\circ R}_{\sigma,\mathcal{N}}^{P}\circ\mathcal{U}$ in recovering
%$\sigma$ when it acts on the state $\mathcal{N}\left(  \sigma\right)  $.

It would be very useful for applications if the aforementioned refinements of
relative entropy inequalities held for the Petz recovery map (and not merely
for a rotated Petz recovery map), i.e., if they were of the following form:
\begin{align}
D\left(  \rho\Vert\sigma\right)  -D\left(  \mathcal{N}\left(  \rho\right)
\Vert\mathcal{N}\left(  \sigma\right)  \right)   &  \geq-\log F\left(
\rho,\mathcal{R}_{\sigma,\mathcal{N}}^{P}\left(  \mathcal{N}\left(
\rho\right)  \right)  \right)  ,\label{eq:conj-1}\\
D\left(  \rho_{AB}\Vert\sigma_{AB}\right)  -D\left(  \rho_{B}\Vert\sigma
_{B}\right)   &  \geq-\log F\left(  \rho_{AB},\sigma_{AB}^{1/2}\sigma
_{B}^{-1/2}\rho_{B}\sigma_{B}^{-1/2}\sigma_{AB}^{1/2}\right)  ,\\
\sum_{x}p_{X}( x) D( \rho^{x}\Vert\sigma^{x}) -D( \overline{\rho}%
\Vert\overline{\sigma})  &  \geq-2\log\sum_{x}p_{X}( x) \sqrt{F( \rho
^{x},\left(  \sigma^{x}\right)  ^{\frac{1}{2}}\left(  \overline{\sigma
}\right)  ^{-\frac{1}{2}}\overline{\rho}\left(  \overline{\sigma}\right)
^{-\frac{1}{2}}\left(  \sigma^{x}\right)  ^{\frac{1}{2}}) },\\
I(A;B|C)_{\omega}  &  \geq-\log F\left(  \omega_{ABC},\omega_{AC}^{1/2}%
\omega_{C}^{-1/2}\omega_{BC}\omega_{C}^{-1/2}\omega_{AC}^{1/2}\right)  ,\\
H\left(  A|B\right)  _{\overline{\rho}}-\sum_{x}p_{X}\left(  x\right)
H\left(  A|B\right)  _{\rho^{x}}  &  \geq-2\log\sum_{x}p_{X}(x)\sqrt{F\left(
\rho_{AB}^{x},\overline{\rho}_{AB}^{1/2}\overline{\rho}_{B}^{-1/2}\rho_{B}%
^{x}\overline{\rho}_{B}^{-1/2}\overline{\rho}_{AB}^{1/2}\right)  }.
\label{eq:conj-3}%
\end{align}
In \cite[Definition~25]{SBW14}, a R\'{e}nyi information measure was defined to
generalize relative entropy differences. The inequalities
\eqref{eq:conj-1}-\eqref{eq:conj-3} stated above would follow from the
monotonicity of this R\'{e}nyi information measure with respect to the
R\'{e}nyi parameter (see \cite[Conjecture~26]{SBW14}, \cite[Consequences 27
and 28]{SBW14}). A weaker form of (\ref{eq:conj-1}) in terms of trace distance
on the right-hand side was first conjectured in \cite[Eq.~(4.7)]{Z14b}.

Our second contribution in this paper is to show that slightly weaker forms
of these inequalities, featuring instead the square of the Bures distance
\cite{Bures1969} $D_{B}^{2}\left(  \omega,\tau\right)  \equiv2(1-\sqrt
{F\left(  \omega,\tau\right)  })$ on the right-hand side, are all equivalent
(observe that $-\log(F)\geq2(1-\sqrt{F})$). So either all of these refinements
are true or all are false. It remains an important open question to determine
which is the case. This second contribution is in principle conjectural, but we believe it is nonetheless important, for two reasons: (1) Obviously, it reduces the work of proving (or even disproving) entropy inequalities with Petz remainder terms to single cases, which can be chosen according to convenience. (2) It furthers the evidence that the Petz remainder term is the natural one.

The next section recalls the notion of a relative typical subspace and the
remaining sections give proofs of our claims.

\section{Relative typical subspace}

\label{sec:stein-review}We begin by reviewing the notion of a relative typical
subspace from \cite[pages 4-5]{BS12}. Consider spectral decompositions of a
density operator $\rho$ and a positive semi-definite operator $\sigma$ acting
on a finite-dimensional Hilbert space, such that $\operatorname{supp}%
(\rho)\subseteq\operatorname{supp}(\sigma)$:%
\begin{align}
\rho &  =\sum_{x}p_{X}(x)|\psi_{x}\rangle\langle\psi_{x}|,\\
\sigma &  =\sum_{y}f_{Y}(y)|\phi_{y}\rangle\langle\phi_{y}|.
\end{align}
Let us define the relative typical subspace $T_{\rho|\sigma}^{\delta,n}$ for
$\delta>0$ and integer $n\geq1$ as%
\begin{equation}
T_{\rho|\sigma}^{\delta,n}\equiv\mathrm{span}\left\{  |\phi_{y^{n}}%
\rangle:\left\vert -\frac{1}{n}\log(f_{Y^{n}}(y^{n}))+\operatorname{Tr}%
\{\rho\log\sigma\}\right\vert \leq\delta\right\}  ,\label{qtyp}%
\end{equation}
where%
\begin{align}
y^{n} &  \equiv y_{1}\cdots y_{n},\\
f_{Y^{n}}\left(  y^{n}\right)   &  \equiv\prod\limits_{i=1}^{n}f_{Y}\left(
y_{i}\right)  ,\\
|\phi_{y^{n}}\rangle &  \equiv|\phi_{y_{1}}\rangle\otimes\cdots\otimes
|\phi_{y_{n}}\rangle.
\end{align}
We will overload the notation $T_{\rho|\sigma}^{\delta,n}$ to refer also to
the following classical typical set:%
\begin{equation}
T_{\rho|\sigma}^{\delta,n}\equiv\left\{  y^{n}:\left\vert -\frac{1}{n}%
\log(f_{Y^{n}}(y^{n}))+\operatorname{Tr}\{\rho\log\sigma\}\right\vert
\leq\delta\right\}  ,
\end{equation}
with it being clear from the context whether the relative typical subspace or
set is being employed.

Let the projection operator corresponding to the relative typical subspace
$T_{\rho|\sigma}^{\delta,n}$ be called $\Pi_{\rho|\sigma,\delta}^{n}$.
Consider that%
\begin{align}
\operatorname{Tr}\{\rho\log\sigma\} &  =\operatorname{Tr}\left\{  \rho
\log\left(  \sum_{y}f_{Y}(y)|\phi_{y}\rangle\langle\phi_{y}|\right)  \right\}
\\
&  =\sum_{y}\left\langle \phi_{y}\right\vert \rho\left\vert \phi
_{y}\right\rangle \log f_{Y}(y).
\end{align}
Defining%
\begin{equation}
p_{\widetilde{Y}}\left(  y\right)  \equiv\left\langle \phi_{y}\right\vert
\rho\left\vert \phi_{y}\right\rangle ,
\end{equation}
we can then write%
\begin{align}
\operatorname{Tr}\{\rho\log\sigma\} &  =\sum_{y}p_{\widetilde{Y}}\left(
y\right)  \log f_{Y}(y)\\
&  =\mathbb{E}_{\widetilde{Y}}\left\{  \log f_{Y}(\widetilde{Y})\right\}  .
\end{align}
With this in mind, we can now calculate%
\begin{align}
\operatorname{Tr}\left\{  \Pi_{\rho|\sigma,\delta}^{n}\rho^{\otimes
n}\right\}   &  =\sum_{y^{n}\in T_{\rho|\sigma}^{\delta,n}}\langle\phi_{y^{n}%
}|\rho^{\otimes n}|\phi_{y^{n}}\rangle\\
&  =\sum_{y^{n}\in T_{\rho|\sigma}^{\delta,n}}p_{\widetilde{Y}^{n}}\left(
y^{n}\right)  \\
&  =\Pr_{\widetilde{Y}^{n}}\left\{  \widetilde{Y}^{n}\in T_{\rho|\sigma
}^{\delta,n}\right\}  .
\end{align}
Based on the above reductions, and due to the notion of typicality with
respect to the subspace $T_{\rho|\sigma}^{\delta,n}$ defined in (\ref{qtyp}),
it follows from the law of large numbers that, for a given small real number
$\varepsilon\in(0,1)$, and a sufficiently large value of $n$, $\mathrm{Tr}%
\{\Pi_{\rho|\sigma,\delta}^{n}\rho^{\otimes n}\}\geq1-\varepsilon$. In fact,
the convergence $\lim_{n\rightarrow\infty}\operatorname{Tr}\{\Pi_{\rho
|\sigma,\delta}^{n}\rho^{\otimes n}\}=1$ can be taken exponentially fast in
$n$ for a constant $\delta$ by employing the Hoeffding inequality \cite{H63}.

\section{Remainder term for monotonicity of relative entropy with respect to partial
trace}

\begin{theorem}
\label{thm:mono-PT}Let $\rho_{AB}$ be a density operator, $\sigma_{AB}$ be a
positive semi-definite operator, both acting on a finite-dimensional
tensor-product Hilbert space $\mathcal{H}_{A}\otimes\mathcal{H}_{B}$, such
that $\operatorname{supp}(\rho_{AB})\subseteq\operatorname{supp}(\sigma_{AB}%
)$, $\sigma_{B}\equiv\operatorname{Tr}_{A}\left\{  \sigma_{AB}\right\}  $ is
positive definite, and $\rho_{B}\equiv\operatorname{Tr}_{A}\left\{  \rho
_{AB}\right\}  $. Then the following inequality refines monotonicity of
relative entropy with respect to partial trace:%
\begin{equation}
D\left(  \rho_{AB}\Vert\sigma_{AB}\right)  -D\left(  \rho_{B}\Vert\sigma
_{B}\right)  \geq-\log F\left(  \rho_{AB},\left(  \mathcal{V}_{AB}%
\circ\mathcal{R}_{B\rightarrow AB}^{P}\circ\mathcal{U}_{B}\right)  \left(
\rho_{B}\right)  \right)  ,\label{eq:mono-PT}%
\end{equation}
for unitary channels $\mathcal{U}_{B}$ and $\mathcal{V}_{AB}$ defined in terms
of some unitary operators $U_{B}$ and $V_{AB}$ as%
\begin{align}
\mathcal{U}_{B}\left(  \cdot\right)   &  \equiv U_{B}\left(  \cdot\right)
U_{B}^{\dag},\\
\mathcal{V}_{AB}\left(  \cdot\right)   &  \equiv V_{AB}\left(  \cdot\right)
V_{AB}^{\dag},
\end{align}
and with $\mathcal{R}_{B\rightarrow AB}^{P}$ the CPTP\ Petz recovery map:%
\begin{equation}
\mathcal{R}_{B\rightarrow AB}^{P}\left(  \cdot\right)  \equiv\sigma_{AB}%
^{1/2}\sigma_{B}^{-1/2}\left(  \cdot\right)  \sigma_{B}^{-1/2}\sigma
_{AB}^{1/2}.
\end{equation}

\end{theorem}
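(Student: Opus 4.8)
The plan is to reduce the general statement \eqref{eq:mono-PT}, in which $\sigma_{AB}$ is an \emph{arbitrary} positive semi-definite operator, to the already-established strong-subadditivity remainder \eqref{eq:FR} of Fawzi and Renner, using the relative typical projector $\Pi_{\rho|\sigma,\delta}^{n}$ of Section~\ref{sec:stein-review} as the device that converts a general reference into a ``flat'' one. The guiding observation is that \eqref{eq:FR} is precisely the special case of monotonicity under partial trace in which the reference operator is proportional to a projector on the relevant factor (there it is $\omega_{AC}\otimes I_{B}$). By the definition \eqref{qtyp}, on the range of $\Pi_{\rho_{AB}|\sigma_{AB},\delta}^{n}$ every eigenvalue of $\sigma_{AB}^{\otimes n}$ lies within a factor $2^{\pm n\delta}$ of $2^{n\operatorname{Tr}\{\rho_{AB}\log\sigma_{AB}\}}$, and likewise $\sigma_{B}^{\otimes n}$ is pinched around $2^{n\operatorname{Tr}\{\rho_{B}\log\sigma_{B}\}}$ on the range of $\Pi_{\rho_{B}|\sigma_{B},\delta}^{n}$. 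Thus, asymptotically, each reference becomes a scalar multiple of its typical projector, which is exactly the ``identity factor'' structure that \eqref{eq:FR} requires.

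First I would pass to the i.i.d.\ regime: by additivity of relative entropy on tensor products the left-hand side of \eqref{eq:mono-PT} equals $\tfrac{1}{n}[D(\rho_{AB}^{\otimes n}\Vert\sigma_{AB}^{\otimes n})-D(\rho_{B}^{\otimes n}\Vert\sigma_{B}^{\otimes n})]$, so it suffices to bound the $n$-copy difference from below and divide by $n$. Next I would sandwich $\rho_{AB}^{\otimes n}$ and $\rho_{B}^{\otimes n}$ between the relative typical projectors, controlling the perturbation by $\operatorname{Tr}\{\Pi_{\rho|\sigma,\delta}^{n}\rho^{\otimes n}\}\geq 1-\varepsilon$ (Section~\ref{sec:stein-review}) together with the gentle-measurement lemma and Fannes-type continuity, and then replace $\log\sigma^{\otimes n}$ by the scalar $n\operatorname{Tr}\{\rho\log\sigma\}$ on the typical subspace up to an error of order $n\delta$. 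After this flattening step the $n$-copy difference becomes, modulo errors of size $O(n\delta)+O(n\varepsilon)+o(n)$, an instance with a flat reference to which \eqref{eq:FR} applies, yielding a lower bound $-\log F\big(\rho_{AB}^{\otimes n},(\mathcal V^{(n)}\circ\mathcal R^{P,(n)}\circ\mathcal U^{(n)})(\rho_{B}^{\otimes n})\big)$ for some unitary channels $\mathcal U^{(n)},\mathcal V^{(n)}$ and the Petz map of the flattened problem. Since the genuine Petz map associated with the i.i.d.\ pair $\sigma_{AB}^{\otimes n},\sigma_{B}^{\otimes n}$ is exactly the $n$-fold tensor power of $\mathcal R_{B\rightarrow AB}^{P}$ from the theorem statement, the remaining task is to show that, after dividing by $n$ and sending $n\to\infty$ with $\delta,\varepsilon\to 0$ (using the exponentially fast convergence noted in Section~\ref{sec:stein-review} to couple $\varepsilon=\varepsilon_n\to0$), the bound collapses onto the single-copy inequality \eqref{eq:mono-PT}.

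The hard part will be this last single-letterization, namely extracting a single-copy recovery map and single-copy rotation unitaries from the global $n$-copy bound while verifying that all typicality errors disappear under the $\tfrac1n$ prefactor. Two points require care. First, the flattened Petz map must be identified with the tensor power $(\mathcal R_{B\rightarrow AB}^{P})^{\otimes n}$; here one exploits that the inverse and imaginary powers entering \eqref{eq:Petz-map} factor across copies up to vanishing typical-subspace corrections. Second, the unitaries $\mathcal U^{(n)},\mathcal V^{(n)}$ supplied by \eqref{eq:FR} act globally and a priori do not factorize, so one cannot split the fidelity naively. To circumvent this I would use the permutation invariance of $\rho^{\otimes n}$, $\sigma^{\otimes n}$ and of the flattened instance to arrange the rotations in tensor-power form $\mathcal U_{B}^{\otimes n},\mathcal V_{AB}^{\otimes n}$ (equivalently, to realize the rotation through the tensorizing object $\sigma^{it}$), so that $F(\rho_{AB}^{\otimes n},\Phi^{\otimes n}(\rho_{B}^{\otimes n}))=F(\rho_{AB},\Phi(\rho_{B}))^{n}$ and $-\tfrac1n\log$ returns the single-copy fidelity, with $\Phi=\mathcal V_{AB}\circ\mathcal R_{B\rightarrow AB}^{P}\circ\mathcal U_{B}$.

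Finally I would pass to the limit. The single-copy unitaries $\mathcal U_{B}^{(n)},\mathcal V_{AB}^{(n)}$ still depend on $n$, so I would invoke compactness of the finite-dimensional unitary groups to extract a convergent subsequence as $n\to\infty$ and $\delta,\varepsilon\to0$, obtaining limiting unitaries $U_{B},V_{AB}$. Joint continuity of the fidelity and of $\mathcal R_{B\rightarrow AB}^{P}$ in their arguments then yields \eqref{eq:mono-PT} with genuine single-copy objects. The delicate bookkeeping throughout is to confirm that the gentle-measurement and flattening errors, being of order $n\delta$ and $n\varepsilon$, are annihilated by the $\tfrac1n$ factor and vanish in the iterated limit, whereas both the relative-entropy difference and the $-\log F$ term are genuinely of order $n$ and survive it; I expect this error control, rather than the structural reduction, to be the main technical obstacle.
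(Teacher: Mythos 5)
Your flattening intuition is sound as far as it goes---the paper likewise passes to the i.i.d.\ regime and uses the relative typical projectors to trade $\sigma^{\otimes n}$ for near-flat operators at an $O(\delta)$ rate penalty---but the core of your plan, invoking \eqref{eq:FR} as a black box on the flattened $n$-copy instance, fails structurally. The inequality \eqref{eq:FR} is not ``partial-trace monotonicity with a projector reference'': it requires a tripartite state $\omega_{ABC}$, its reference operators $\omega_{AC}\otimes I_{B}$ and $\omega_{C}\otimes I_{B}$ are marginals \emph{of the state itself} tensored with the identity, and the recovery map it supplies is built from those same marginals $\omega_{AC},\omega_{C}$. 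After flattening, your references are $c_{1}\Pi_{\rho_{AB}|\sigma_{AB},\delta}^{n}$ and $c_{2}\Pi_{\rho_{B}|\sigma_{B},\delta}^{n}$: these projectors are entangled across the $A^{n}/B^{n}$ cut (so not of the form $X\otimes I$), they are not related by partial trace ($\operatorname{Tr}_{A^{n}}\{\Pi_{\rho_{AB}|\sigma_{AB},\delta}^{n}\}$ is not proportional to $\Pi_{\rho_{B}|\sigma_{B},\delta}^{n}$), and neither is a marginal of $\rho_{AB}^{\otimes n}$. Consequently there is no tripartite state whose conditional mutual information reproduces the flattened relative-entropy difference; and even if there were, the Petz map delivered by \eqref{eq:FR} would be constructed from $\rho$-marginals rather than from $\sigma_{AB},\sigma_{B}$, so it could not be identified with $\left(\mathcal{R}_{B\rightarrow AB}^{P}\right)^{\otimes n}$ ``up to vanishing typical-subspace corrections.'' This is exactly why the paper does not derive Theorem~\ref{thm:mono-PT} from \eqref{eq:FR}: it reopens the Fawzi--Renner proof and redoes it with relative typical projectors, injecting the $\sigma$-dependence by hand through the eigendecompositions $\sigma_{B}^{\otimes n}=\sum_{s}s\Pi_{s}$ and $\sigma_{AB}^{\otimes n}=\sum_{p}p\Pi_{p}$ together with the identities $\Pi_{s}=\sqrt{s}\left(\sigma_{B}^{-1/2}\right)^{\otimes n}\Pi_{s}$ and $\Pi_{p}=p^{-1/2}\left(\sigma_{AB}^{1/2}\right)^{\otimes n}\Pi_{p}$.

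A second gap is your treatment of the rotations. A black-box application of \eqref{eq:FR} to an $n$-copy instance yields global unitaries on the $n$-fold space with no product structure, and permutation invariance of the inputs does not let you ``arrange'' them into $U_{B}^{\otimes n},V_{AB}^{\otimes n}$ after the fact: tensor-power form is precisely what \cite[Lemma~4.2]{FR14} (Lemma~\ref{lem_subfidelity}) provides, and it must be applied \emph{inside} the argument, at the two places where the spectral projectors $\Pi_{s}$ and $\Pi_{p}$ are traded for unitaries---which is what the paper does. (Your alternative of realizing the rotations through the tensorizing object $\sigma^{it}$ is the content of the later work \cite{W15}, not available here.) Your closing compactness/subsequence step is unnecessary but harmless; the paper instead removes the $n$-dependence by maximizing the single-copy fidelity over all unitaries before taking the $n$-th root, cf.~\eqref{eq:opt-over-U}. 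So the error bookkeeping you flag as the main obstacle ($O(n\delta)$ terms annihilated by the $1/n$ prefactor in the iterated limit) in fact matches the paper and is routine; the step you treat as structural and easy---the reduction to \eqref{eq:FR}---is the one that does not exist.
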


\begin{proof}
%[Proof of Theorem~\ref{thm:mono-PT}]
Our proof of Theorem~\ref{thm:mono-PT}
proceeds very similarly to the proof of \cite[Theorem~5.1]{FR14}, with only a
few modifications. We give a full proof for completeness. Our proof makes use of Lemmas~2.3, 4.2, B.2, B.6, and B.7 from \cite{FR14}. For convenience of the reader, we recall these statements in Appendix~\ref{app:FR-lemmas}.

The expression on the left-hand side of \eqref{eq:mono-PT} is equivalent to%
\begin{equation}
-H\left(  A|B\right)  _{\rho}-\operatorname{Tr}\left\{  \rho_{AB}\log
\sigma_{AB}\right\}  +\operatorname{Tr}\left\{  \rho_{B}\log\sigma
_{B}\right\}  ,
\end{equation}
where $H\left(  A|B\right)  _{\rho}\equiv H\left(  AB\right)  _{\rho}-H\left(
B\right)  _{\rho}$ is the conditional entropy and the entropy is defined as
$H\left(  \omega\right)  \equiv-\operatorname{Tr}\left\{  \omega\log
\omega\right\}  $. So we need the relative typical projectors $\Pi_{\rho
_{AB}|\sigma_{AB},\delta}^{n}$ and $\Pi_{\rho_{B}|\sigma_{B},\delta}^{n}$
defined in Section~\ref{sec:stein-review}. Abbreviate these as $\Pi_{AB}^{n}$
and $\Pi_{B}^{n}$, respectively.

We begin by defining%
\begin{equation}
\mathcal{W}_{n}\left(  X_{A^{n}B^{n}}\right)  \equiv\Pi_{AB}^{n}\Pi_{B}%
^{n}X_{A^{n}B^{n}}\Pi_{B}^{n}\Pi_{AB}^{n}.
\end{equation}
We employ the shorthand $\mathcal{W}_{n}\left(  X_{B^{n}}\right)
\equiv\mathcal{W}_{n}\left(  I_{A}^{\otimes n}\otimes X_{B^{n}}\right)  $
throughout. Consider from the gentle measurement lemma \cite{W99}, properties
of the trace norm,\ and relative typicality that%
\begin{align}
\operatorname{Tr}\left\{  \mathcal{W}_{n}\left(  \rho_{AB}^{\otimes n}\right)
\right\}   &  =\operatorname{Tr}\left\{  \Pi_{AB}^{n}\Pi_{B}^{n}\rho
_{AB}^{\otimes n}\Pi_{B}^{n}\right\}  \\
&  \geq\operatorname{Tr}\left\{  \Pi_{AB}^{n}\rho_{AB}^{\otimes n}\right\}
-\left\Vert \Pi_{B}^{n}\rho_{AB}^{\otimes n}\Pi_{B}^{n}-\rho_{AB}^{\otimes
n}\right\Vert _{1}\\
&  \geq1-\eta,
\end{align}
where $\eta$ is an arbitrarily small positive number for sufficiently large
$n$. So we apply \cite[Lemma~2.3]{FR14} to find that%
\begin{align}
D\left(  \mathcal{W}_{n}\left(  \rho_{AB}^{\otimes n}\right)  \Vert
\mathcal{W}_{n}\left(  \rho_{B}^{\otimes n}\right)  \right)   &  \leq n\left(
D\left(  \rho_{AB}\Vert I_{A}\otimes\rho_{B}\right)  +\frac{\delta}{2}\right)
\\
&  =n\left(  -H\left(  A|B\right)  _{\rho}+\frac{\delta}{2}\right)  ,
\end{align}
where the above inequality holds for sufficiently large $n$. A well-known
relation between the root fidelity $\sqrt{F}\left(  \omega,\tau\right)
\equiv\left\Vert \sqrt{\omega}\sqrt{\tau}\right\Vert _{1}$\ and relative
entropy \cite[Lemma~B.2]{FR14}\ then gives that%
\begin{equation}
\frac{1}{\operatorname{Tr}\left\{  \mathcal{W}_{n}\left(  \rho_{AB}^{\otimes
n}\right)  \right\}  }\sqrt{F}\left(  \mathcal{W}_{n}\left(  \rho
_{AB}^{\otimes n}\right)  ,\mathcal{W}_{n}\left(  \rho_{B}^{\otimes n}\right)
\right)  \geq2^{\frac{1}{2}n\left(  H\left(  A|B\right)  _{\rho}-\frac{\delta
}{2}\right)  }.
\end{equation}
Use \cite[Lemma~B.6]{FR14} to remove the projector $\Pi_{AB}^{n}$ from the
second argument, so that%
\begin{equation}
\frac{1}{\operatorname{Tr}\left\{  \mathcal{W}_{n}\left(  \rho_{AB}^{\otimes
n}\right)  \right\}  }\sqrt{F}\left(  \mathcal{W}_{n}\left(  \rho
_{AB}^{\otimes n}\right)  ,\Pi_{B}^{n}\rho_{B}^{\otimes n}\Pi_{B}^{n}\right)
\geq2^{\frac{1}{2}n\left(  H\left(  A|B\right)  _{\rho}-\frac{\delta}%
{2}\right)  },
\end{equation}
and the trace term can be eliminated at the expense of decreasing the exponent
by a constant times~$n$:%
\begin{equation}
\sqrt{F}\left(  \mathcal{W}_{n}\left(  \rho_{AB}^{\otimes n}\right)  ,\Pi
_{B}^{n}\rho_{B}^{\otimes n}\Pi_{B}^{n}\right)  \geq2^{\frac{1}{2}n\left(
H\left(  A|B\right)  _{\rho}-\delta\right)  }.
\end{equation}
Let an eigendecomposition of $\sigma_{B}^{\otimes n}$ be given as%
\begin{equation}
\sigma_{B}^{\otimes n}=\sum_{s\in S_{n}}s\Pi_{s},\label{eq:method-types-1}%
\end{equation}
where $S_{n}$ is the set of eigenvalues of $\sigma_{B}^{\otimes n}$. By
defining%
\begin{equation}
S_{n,\delta}\equiv\left\{  s\in S_{n}:\left\vert -\frac{1}{n}\log
(s)+\mathrm{Tr}\{\rho_{B}\log\sigma_{B}\}\right\vert \leq\delta\right\}  ,
\end{equation}
we see from (\ref{qtyp}) and the definition of $\Pi_{B}^{n}$ that%
\begin{equation}
\Pi_{B}^{n}=\sum_{s\in S_{n,\delta}}\Pi_{s}.\label{eq:method-types-4}%
\end{equation}
Furthermore, it follows from a trivial combinatorial consideration that
$\left\vert S_{n,\delta}\right\vert \leq\operatorname{poly}\left(  n\right)
$. Then consider that $\sum_{s}\Pi_{s}=I$ and apply \cite[Lemma~B.7]{FR14} to
get%
\begin{align}
\sqrt{F}\left(  \mathcal{W}_{n}\left(  \rho_{AB}^{\otimes n}\right)  ,\Pi
_{B}^{n}\rho_{B}^{\otimes n}\Pi_{B}^{n}\right)   &  \leq\sum_{s\in S_{n}}%
\sqrt{F}\left(  \mathcal{W}_{n}\left(  \rho_{AB}^{\otimes n}\right)  ,\Pi
_{s}\Pi_{B}^{n}\rho_{B}^{\otimes n}\Pi_{B}^{n}\Pi_{s}\right)  \\
&  =\sum_{s\in S_{n,\delta}}\sqrt{F}\left(  \mathcal{W}_{n}\left(  \rho
_{AB}^{\otimes n}\right)  ,\Pi_{s}\rho_{B}^{\otimes n}\Pi_{s}\right)
\label{eq:types-orthogonal}\\
&  \leq\left\vert S_{n,\delta}\right\vert \max_{s\in S_{n,\delta}}\sqrt
{F}\left(  \mathcal{W}_{n}\left(  \rho_{AB}^{\otimes n}\right)  ,\Pi_{s}%
\rho_{B}^{\otimes n}\Pi_{s}\right)  ,
\end{align}
where (\ref{eq:types-orthogonal}) follows because $\Pi_{s}\Pi_{B}^{n}=\Pi_{s}$
if $s\in S_{n,\delta}$ and it is equal to zero otherwise. So we find that
there exists an $s$ such that%
\begin{equation}
\sqrt{F}\left(  \mathcal{W}_{n}\left(  \rho_{AB}^{\otimes n}\right)  ,\Pi
_{B}^{n}\rho_{B}^{\otimes n}\Pi_{B}^{n}\right)  \leq\text{poly}\left(
n\right)  \sqrt{F}\left(  \mathcal{W}_{n}\left(  \rho_{AB}^{\otimes n}\right)
,\Pi_{s}\rho_{B}^{\otimes n}\Pi_{s}\right)  .
\end{equation}
From the definition of $\Pi_{s}$ we can write%
\begin{equation}
\Pi_{s}=\sqrt{s}\left(  \sigma_{B}^{-1/2}\right)  ^{\otimes n}\Pi_{s}.
\end{equation}
From the definition of $S_{n,\delta}$, we have that%
\begin{equation}
\sqrt{s}\leq2^{\frac{1}{2}n\left[  \operatorname{Tr}\left\{  \rho_{B}%
\log\sigma_{B}\right\}  +\delta\right]  },
\end{equation}
giving that%
\begin{align}
&  \sqrt{F}\left(  \mathcal{W}_{n}\left(  \rho_{AB}^{\otimes n}\right)
,\Pi_{s}\rho_{B}^{\otimes n}\Pi_{s}\right)  \nonumber\\
&  =\sqrt{s}\sqrt{F}\left(  \mathcal{W}_{n}\left(  \rho_{AB}^{\otimes
n}\right)  ,\left(  \sigma_{B}^{-1/2}\right)  ^{\otimes n}\Pi_{s}\left(
\rho_{B}^{\otimes n}\right)  \Pi_{s}\left(  \sigma_{B}^{-1/2}\right)
^{\otimes n}\right)  \\
&  \leq2^{\frac{1}{2}n\left[  \operatorname{Tr}\left\{  \rho_{B}\log\sigma
_{B}\right\}  +\delta\right]  }\sqrt{F}\left(  \mathcal{W}_{n}\left(
\rho_{AB}^{\otimes n}\right)  ,\left(  \sigma_{B}^{-1/2}\right)  ^{\otimes
n}\Pi_{s}\rho_{B}^{\otimes n}\Pi_{s}\left(  \sigma_{B}^{-1/2}\right)
^{\otimes n}\right)  \\
&  =2^{\frac{1}{2}n\left[  \operatorname{Tr}\left\{  \rho_{B}\log\sigma
_{B}\right\}  +\delta\right]  }\sqrt{F}\left(  \Pi_{s}\left(  \sigma
_{B}^{-1/2}\right)  ^{\otimes n}\mathcal{W}_{n}\left(  \rho_{AB}^{\otimes
n}\right)  \left(  \sigma_{B}^{-1/2}\right)  ^{\otimes n}\Pi_{s},\rho
_{B}^{\otimes n}\right)  ,
\end{align}
where the last equality is from \cite[Lemma~B.6]{FR14}. Now, by \cite[Lemma
4.2]{FR14}, there exists a unitary $U_{B}$\ such that\footnote{Note that the
unitary $U_{B}$ depends on $n$, but we suppress this in the notation for
simplicity.}%
\begin{align}
&  \sqrt{F}\left(  \Pi_{s}\left(  \sigma_{B}^{-1/2}\right)  ^{\otimes
n}\mathcal{W}_{n}\left(  \rho_{AB}^{\otimes n}\right)  \left(  \sigma
_{B}^{-1/2}\right)  ^{\otimes n}\Pi_{s},\rho_{B}^{\otimes n}\right)
\nonumber\\
&  \leq\text{poly}\left(  n\right)  \sqrt{F}\left(  \left(  \sigma_{B}%
^{-1/2}\right)  ^{\otimes n}\mathcal{W}_{n}\left(  \rho_{AB}^{\otimes
n}\right)  \left(  \sigma_{B}^{-1/2}\right)  ^{\otimes n},U_{B}^{\otimes
n}\rho_{B}^{\otimes n}\left(  U_{B}^{\otimes n}\right)  ^{\dag}\right)  \\
&  =\text{poly}\left(  n\right)  \sqrt{F}\left(  \mathcal{W}_{n}\left(
\rho_{AB}^{\otimes n}\right)  ,\left(  \sigma_{B}^{-1/2}\right)  ^{\otimes
n}U_{B}^{\otimes n}\rho_{B}^{\otimes n}\left(  U_{B}^{\otimes n}\right)
^{\dag}\left(  \sigma_{B}^{-1/2}\right)  ^{\otimes n}\right)  .
\end{align}
The equality above follows by applying \cite[Lemma~B.6]{FR14}. Combining
everything up until now, we get%
\begin{multline}
2^{\frac{1}{2}n\left(  H\left(  A|B\right)  _{\rho}-\operatorname{Tr}\left\{
\rho_{B}\log\sigma_{B}\right\}  -2\delta\right)  }\\
\leq\text{poly}\left(  n\right)  \sqrt{F}\left(  \Pi_{AB}^{n}\Pi_{B}^{n}%
\rho_{AB}^{\otimes n}\Pi_{B}^{n}\Pi_{AB}^{n},\left(  \sigma_{B}^{-1/2}\right)
^{\otimes n}U_{B}^{\otimes n}\rho_{B}^{\otimes n}\left(  U_{B}^{\otimes
n}\right)  ^{\dag}\left(  \sigma_{B}^{-1/2}\right)  ^{\otimes n}\right)  .
\end{multline}

Let an eigendecomposition of $\sigma_{AB}^{\otimes n}$ be given as%
\begin{equation}
\sigma_{AB}^{\otimes n}=\sum_{p\in P_{n}}p\Pi_{p},
\end{equation}
and%
\begin{equation}
\Pi_{AB}^{n}=\sum_{p\in P_{n,\delta}}\Pi_{p},
\end{equation}
where these developments follow the same reasoning as (\ref{eq:method-types-1}%
)-(\ref{eq:method-types-4}). Now we continue with the fact that $\sum_{p\in
P_{n}}\Pi_{p}=I$ and \cite[Lemma~B.7]{FR14} to get that%
\begin{align}
&  \sqrt{F}\left(  \Pi_{AB}^{n}\Pi_{B}^{n}\rho_{AB}^{\otimes n}\Pi_{B}^{n}%
\Pi_{AB}^{n},\left(  \sigma_{B}^{-1/2}\right)  ^{\otimes n}U_{B}^{\otimes
n}\rho_{B}^{\otimes n}\left(  U_{B}^{\otimes n}\right)  ^{\dag}\left(
\sigma_{B}^{-1/2}\right)  ^{\otimes n}\right)  \nonumber\\
&  \leq\sum_{p\in P_{n}}\sqrt{F}\left(  \Pi_{p}\Pi_{AB}^{n}\Pi_{B}^{n}%
\rho_{AB}^{\otimes n}\Pi_{B}^{n}\Pi_{AB}^{n}\Pi_{p},\left(  \sigma_{B}%
^{-1/2}\right)  ^{\otimes n}U_{B}^{\otimes n}\rho_{B}^{\otimes n}\left(
U_{B}^{\otimes n}\right)  ^{\dag}\left(  \sigma_{B}^{-1/2}\right)  ^{\otimes
n}\right)  \\
&  =\sum_{p\in P_{n,\delta}}\sqrt{F}\left(  \Pi_{p}\Pi_{B}^{n}\rho
_{AB}^{\otimes n}\Pi_{B}^{n}\Pi_{p},\left(  \sigma_{B}^{-1/2}\right)
^{\otimes n}U_{B}^{\otimes n}\rho_{B}^{\otimes n}\left(  U_{B}^{\otimes
n}\right)  ^{\dag}\left(  \sigma_{B}^{-1/2}\right)  ^{\otimes n}\right)  \\
&  \leq\left\vert P_{n,\delta}\right\vert \max_{p\in P_{n,\delta}}\sqrt
{F}\left(  \Pi_{p}\Pi_{B}^{n}\rho_{AB}^{\otimes n}\Pi_{B}^{n}\Pi_{p},\left(
\sigma_{B}^{-1/2}\right)  ^{\otimes n}U_{B}^{\otimes n}\rho_{B}^{\otimes
n}\left(  U_{B}^{\otimes n}\right)  ^{\dag}\left(  \sigma_{B}^{-1/2}\right)
^{\otimes n}\right)  .
\end{align}
Then there exists a $p$ such that%
\begin{multline}
\sqrt{F}\left(  \Pi_{AB}^{n}\Pi_{B}^{n}\rho_{AB}^{\otimes n}\Pi_{B}^{n}%
\Pi_{AB}^{n},\left(  \sigma_{B}^{-1/2}\right)  ^{\otimes n}U_{B}^{\otimes
n}\rho_{B}^{\otimes n}\left(  U_{B}^{\otimes n}\right)  ^{\dag}\left(
\sigma_{B}^{-1/2}\right)  ^{\otimes n}\right)  \\
\leq\text{poly}\left(  n\right)  \sqrt{F}\left(  \Pi_{p}\Pi_{B}^{n}\rho
_{AB}^{\otimes n}\Pi_{B}^{n}\Pi_{p},\left(  \sigma_{B}^{-1/2}\right)
^{\otimes n}U_{B}^{\otimes n}\rho_{B}^{\otimes n}\left(  U_{B}^{\otimes
n}\right)  ^{\dag}\left(  \sigma_{B}^{-1/2}\right)  ^{\otimes n}\right)  .
\end{multline}
From the definition of $\Pi_{p}$ we have that%
\begin{equation}
\Pi_{p}=\frac{1}{\sqrt{p}}\left(  \sigma_{AB}^{1/2}\right)  ^{\otimes n}%
\Pi_{p},
\end{equation}
with $\sqrt{p}\geq2^{\frac{1}{2}n\left[  \operatorname{Tr}\left\{  \rho
_{AB}\log\sigma_{AB}\right\}  -\delta\right]  }$. Then by defining
$K\equiv2^{\frac{1}{2}n\left[  \operatorname{Tr}\left\{  \rho_{AB}\log
\sigma_{AB}\right\}  -\delta\right]  }/\sqrt{p}$, we have that%
\begin{align}
&  2^{\frac{1}{2}n\left[  \operatorname{Tr}\left\{  \rho_{AB}\log\sigma
_{AB}\right\}  -\delta\right]  }\sqrt{F}\left(  \Pi_{p}\Pi_{B}^{n}\rho
_{AB}^{\otimes n}\Pi_{B}^{n}\Pi_{p},\left(  \sigma_{B}^{-1/2}\right)
^{\otimes n}U_{B}^{\otimes n}\rho_{B}^{\otimes n}\left(  U_{B}^{\otimes
n}\right)  ^{\dag}\left(  \sigma_{B}^{-1/2}\right)  ^{\otimes n}\right)
\nonumber\\
&  =K\ \sqrt{F}\left(  \left(  \sigma_{AB}^{1/2}\right)  ^{\otimes n}\Pi
_{p}\Pi_{B}^{n}\rho_{AB}^{\otimes n}\Pi_{B}^{n}\Pi_{p}\left(  \sigma
_{AB}^{1/2}\right)  ^{\otimes n},\left(  \sigma_{B}^{-1/2}\right)  ^{\otimes
n}U_{B}^{\otimes n}\rho_{B}^{\otimes n}\left(  U_{B}^{\otimes n}\right)
^{\dag}\left(  \sigma_{B}^{-1/2}\right)  ^{\otimes n}\right)  \\
&  \leq\sqrt{F}\left(  \left(  \sigma_{AB}^{1/2}\right)  ^{\otimes n}\Pi
_{p}\Pi_{B}^{n}\rho_{AB}^{\otimes n}\Pi_{B}^{n}\Pi_{p}\left(  \sigma
_{AB}^{1/2}\right)  ^{\otimes n},\left(  \sigma_{B}^{-1/2}\right)  ^{\otimes
n}U_{B}^{\otimes n}\rho_{B}^{\otimes n}\left(  U_{B}^{\otimes n}\right)
^{\dag}\left(  \sigma_{B}^{-1/2}\right)  ^{\otimes n}\right)  \\
&  =\sqrt{F}\left(  \Pi_{p}\Pi_{B}^{n}\rho_{AB}^{\otimes n}\Pi_{B}^{n}\Pi
_{p},\left(  \sigma_{AB}^{1/2}\right)  ^{\otimes n}\left(  \sigma_{B}%
^{-1/2}\right)  ^{\otimes n}U_{B}^{\otimes n}\rho_{B}^{\otimes n}\left(
U_{B}^{\otimes n}\right)  ^{\dag}\left(  \sigma_{B}^{-1/2}\right)  ^{\otimes
n}\left(  \sigma_{AB}^{1/2}\right)  ^{\otimes n}\right)  .
\end{align}
Now by \cite[Lemma 4.2]{FR14}, there exists a unitary $V_{AB}$ such
that\footnote{Note that the unitary $V_{AB}$ depends on $n$, but we suppress
this in the notation for simplicity.}%
\begin{multline}
\sqrt{F}\left(  \Pi_{p}\Pi_{B}^{n}\rho_{AB}^{\otimes n}\Pi_{B}^{n}\Pi
_{p},\left(  \sigma_{AB}^{1/2}\right)  ^{\otimes n}\left(  \sigma_{B}%
^{-1/2}\right)  ^{\otimes n}U_{B}^{\otimes n}\rho_{B}^{\otimes n}\left(
U_{B}^{\otimes n}\right)  ^{\dag}\left(  \sigma_{B}^{-1/2}\right)  ^{\otimes
n}\left(  \sigma_{AB}^{1/2}\right)  ^{\otimes n}\right)  \leq
\label{eq:use-for-support}\\
\text{poly}\left(  n\right)  \sqrt{F}\left(  \rho_{AB}^{\otimes n}%
,V_{AB}^{\otimes n}\left(  \sigma_{AB}^{1/2}\right)  ^{\otimes n}\left(
\sigma_{B}^{-1/2}\right)  ^{\otimes n}U_{B}^{\otimes n}\rho_{B}^{\otimes
n}\left(  U_{B}^{\otimes n}\right)  ^{\dag}\left(  \sigma_{B}^{-1/2}\right)
^{\otimes n}\left(  \sigma_{AB}^{1/2}\right)  ^{\otimes n}\left(
V_{AB}^{\otimes n}\right)  ^{\dag}\right)  .
\end{multline}
Putting everything together, we get that%
\begin{align}
&  2^{\frac{1}{2}n\left(  H\left(  A|B\right)  _{\rho}-\operatorname{Tr}%
\left\{  \rho_{B}\log\sigma_{B}\right\}  +\operatorname{Tr}\left\{  \rho
_{AB}\log\sigma_{AB}\right\}  -3\delta\right)  }\nonumber\\
&  \leq\text{poly}\left(  n\right)  \sqrt{F}\left(  \rho_{AB}^{\otimes
n},V_{AB}^{\otimes n}\left(  \sigma_{AB}^{1/2}\right)  ^{\otimes n}\left(
\sigma_{B}^{-1/2}\right)  ^{\otimes n}U_{B}^{\otimes n}\rho_{B}^{\otimes
n}\left(  U_{B}^{\otimes n}\right)  ^{\dag}\left(  \sigma_{B}^{-1/2}\right)
^{\otimes n}\left(  \sigma_{AB}^{1/2}\right)  ^{\otimes n}\left(
V_{AB}^{\otimes n}\right)  ^{\dag}\right)  \ \\
&  =\text{poly}\left(  n\right)  \left[  F\left(  \rho_{AB},V_{AB}\sigma
_{AB}^{1/2}\sigma_{B}^{-1/2}U_{B}\rho_{B}U_{B}^{\dag}\sigma_{B}^{-1/2}%
\sigma_{AB}^{1/2}V_{AB}^{\dag}\right)  \right]  ^{n}\\
&  \leq\text{poly}\left(  n\right)  \left[  \max_{U_{B},V_{AB}}F\left(
\rho_{AB},V_{AB}\sigma_{AB}^{1/2}\sigma_{B}^{-1/2}U_{B}\rho_{B}U_{B}^{\dag
}\sigma_{B}^{-1/2}\sigma_{AB}^{1/2}V_{AB}^{\dag}\right)  \right]
^{n}.\label{eq:opt-over-U}%
\end{align}
The equality follows because the fidelity is multiplicative with respect to tensor
products. In the last line above, we take a maximization over all unitaries in
order to remove the dependence of the unitaries on $n$. Taking the
$n^{\text{th}}$ root of the last line above, we find that there exists a
$V_{AB}$ and $U_{B}$ such that%
\begin{multline}
2^{\frac{1}{2}\left(  H\left(  A|B\right)  _{\rho}-\operatorname{Tr}\left\{
\rho_{B}\log\sigma_{B}\right\}  +\operatorname{Tr}\left\{  \rho_{AB}\log
\sigma_{AB}\right\}  -3\delta\right)  }\\
\leq\sqrt[n]{\text{poly}\left(  n\right)  }\sqrt{F}\left(  \rho_{AB}%
,V_{AB}\sigma_{AB}^{1/2}\sigma_{B}^{-1/2}U_{B}\rho_{B}U_{B}^{\dag}\sigma
_{B}^{-1/2}\sigma_{AB}^{1/2}V_{AB}^{\dag}\right)  .
\end{multline}
By taking the limit as $n$ becomes large, using the fact that
\begin{equation}  -\left[H\left(  A|B\right)  _{\rho}-\operatorname{Tr}\left\{
\rho_{B}\log\sigma_{B}\right\}  +\operatorname{Tr}\left\{  \rho_{AB}\log
\sigma_{AB}\right\}\right] = D\left(  \rho_{AB}\Vert\sigma_{AB}\right)  -D\left(  \rho_{B}\Vert\sigma
_{B}\right),\end{equation}
and noting that $\delta>0$ was arbitrary, this finally yields the desired
inequality%
\begin{equation}
D\left(  \rho_{AB}\Vert\sigma_{AB}\right)  -D\left(  \rho_{B}\Vert\sigma
_{B}\right)  \geq-\log F\left(  \rho_{AB},V_{AB}\sigma_{AB}^{1/2}\sigma
_{B}^{-1/2}U_{B}\rho_{B}U_{B}^{\dag}\sigma_{B}^{-1/2}\sigma_{AB}^{1/2}%
V_{AB}^{\dag}\right)  .
\end{equation}

\end{proof}

\begin{remark}
\label{rmk:partial_trace} Suppose in Theorem~\ref{thm:mono-PT}\ that
$\sigma_{AB}$ is a density operator. It remains open to quantify the
performance of the rotated Petz recovery map $\mathcal{V}_{AB}\circ
\mathcal{R}_{B\rightarrow AB}^{P}\circ\mathcal{U}_{B}$ on the reduced state
$\sigma_{B}$. In particular, if the unitary channels $\mathcal{U}_{B}$ and
$\mathcal{V}_{AB}$ were not necessary (with each instead being equal to the
identity channel), then it would be possible to do so. This form of the
recovery map was previously conjectured in \cite[Consequence~27]{SBW14} in
terms of the following inequality:%
\begin{equation}
D\left(  \rho_{AB}\Vert\sigma_{AB}\right)  -D\left(  \rho_{B}\Vert\sigma
_{B}\right)  \geq-\log F\left(  \rho_{AB},\mathcal{R}_{B\rightarrow AB}%
^{P}\left(  \rho_{B}\right)  \right)  . \label{eq:conjecture}%
\end{equation}
If this conjecture is true, then one could perform the Petz recovery map on
system $B$ and be guaranteed a perfect recovery of $\sigma_{AB}$ if the state
of $B$ is $\sigma_{B}$, while having a performance limited by
\eqref{eq:conjecture} if the state of $B$ is $\rho_{B}$. By a modification of
the proof of Theorem~\ref{thm:mono-PT}, one can also establish the following
lower bound:%
\begin{equation}
D\left(  \rho_{AB}\Vert\sigma_{AB}\right)  -D\left(  \rho_{B}\Vert\sigma
_{B}\right)  \geq-\log F\left(  \rho_{AB},\sigma_{AB}^{1/2}\bar{V}_{AB}\bar
{U}_{B}\sigma_{B}^{-1/2}\rho_{B}\sigma_{B}^{-1/2}\bar{U}_{B}^{\dag}\bar
{V}_{AB}^{\dag}\sigma_{AB}^{1/2}\right)  , \label{eq:alt-bound}%
\end{equation}
for some unitaries $\bar{U}_{B}$ and $\bar{V}_{AB}$. The completely positive
map $\sigma_{AB}^{1/2}\bar{V}_{AB}\bar{U}_{B}\sigma_{B}^{-1/2}\left(
\cdot\right)  \sigma_{B}^{-1/2}\bar{U}_{B}^{\dag}\bar{V}_{AB}^{\dag}%
\sigma_{AB}^{1/2}$ recovers $\sigma_{AB}$ perfectly from $\sigma_{B}$, while
having a performance limited by \eqref{eq:alt-bound} when recovering
$\rho_{AB}$ from $\rho_{B}$. It is however unclear whether this
%completely positive
map is trace preserving.
\end{remark}

\section{Remainder term for monotonicity of relative entropy}

\begin{corollary}
\label{cor:mono}Let $\rho_{S}$ be a density operator and $\sigma_{S}$ be a
positive semi-definite operator, both acting on a Hilbert space $\mathcal{H}%
_{S}$ and such that $\operatorname{supp}(\rho_{S})\subseteq\operatorname{supp}%
(\sigma_{S})$. Let $\mathcal{N}_{S\rightarrow B}$ be a CPTP\ map taking
density operators acting on $\mathcal{H}_{S}$ to density operators acting on
$\mathcal{H}_{B}$ and such that $\mathcal{N}_{S\rightarrow B}\left(
\sigma_{S}\right)  $ is a positive definite operator. Then the following
inequality refines monotonicity of relative entropy:%
\begin{equation}
D\left(  \rho_{S}\Vert\sigma_{S}\right)  -D\left(  \mathcal{N}_{S\rightarrow
B}\left(  \rho_{S}\right)  \Vert\mathcal{N}_{S\rightarrow B}\left(  \sigma
_{S}\right)  \right)  \geq-\log F\left(  \rho_{S},\left(  \mathcal{V}_{S}%
\circ\mathcal{R}_{\sigma,\mathcal{N}}^{P}\circ\mathcal{U}_{B}\right)  \left(
\mathcal{N}_{S\rightarrow B}\left(  \rho_{S}\right)  \right)  \right)
,\label{eq:mono-rel-ent-remainder}%
\end{equation}
for unitary channels $\mathcal{U}_{B}$ and $\mathcal{V}_{S}$ defined in terms
of some unitary operators $U_{B}$ and $V_{S}$ as%
\begin{align}
\mathcal{U}_{B}\left(  \cdot\right)   &  \equiv U_{B}\left(  \cdot\right)
U_{B}^{\dag},\\
\mathcal{V}_{S}\left(  \cdot\right)   &  \equiv V_{S}\left(  \cdot\right)
V_{S}^{\dag},
\end{align}
and with $\mathcal{R}_{\sigma,\mathcal{N}}^{P}$ the CPTP\ Petz recovery map:%
\begin{equation}
\mathcal{R}_{\sigma,\mathcal{N}}^{P}\left(  \cdot\right)  \equiv\sigma
_{S}^{1/2}\mathcal{N}^{\dag}\left[  \left(  \mathcal{N}_{S\rightarrow
B}\left(  \sigma_{S}\right)  \right)  ^{-1/2}\left(  \cdot\right)  \left(
\mathcal{N}_{S\rightarrow B}\left(  \sigma_{S}\right)  \right)  ^{-1/2}%
\right]  \sigma_{S}^{1/2},
\end{equation}
where $\mathcal{N}^{\dag}$ is the adjoint of $\mathcal{N}_{S\rightarrow B}$.
\end{corollary}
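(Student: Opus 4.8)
The plan is to deduce Corollary~\ref{cor:mono} from Theorem~\ref{thm:mono-PT} by means of a Stinespring dilation, which is the standard route from monotonicity under partial trace to monotonicity under a general channel. First I would write the channel in isometric form $\mathcal{N}_{S\rightarrow B}(\cdot)=\operatorname{Tr}_{E}\{V(\cdot)V^{\dag}\}$ for some isometry $V:\mathcal{H}_{S}\rightarrow\mathcal{H}_{B}\otimes\mathcal{H}_{E}$ and define the lifted operators $\rho_{BE}\equiv V\rho_{S}V^{\dag}$ and $\sigma_{BE}\equiv V\sigma_{S}V^{\dag}$. Because $V$ is an isometry we have $\rho_{B}=\mathcal{N}_{S\rightarrow B}(\rho_{S})$ and $\sigma_{B}=\mathcal{N}_{S\rightarrow B}(\sigma_{S})$ (so $\sigma_{B}$ is positive definite by hypothesis), $\operatorname{supp}(\rho_{BE})\subseteq\operatorname{supp}(\sigma_{BE})$, and, crucially, $D(\rho_{S}\Vert\sigma_{S})=D(\rho_{BE}\Vert\sigma_{BE})$ by invariance of relative entropy under isometries. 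Thus the left-hand side of (\ref{eq:mono-rel-ent-remainder}) equals $D(\rho_{BE}\Vert\sigma_{BE})-D(\rho_{B}\Vert\sigma_{B})$, and I would apply Theorem~\ref{thm:mono-PT} with the roles $A\mapsto E$, $B\mapsto B$ (tracing out $E$) to obtain unitaries $U_{B}$ on $\mathcal{H}_{B}$ and $V_{BE}$ on $\mathcal{H}_{B}\otimes\mathcal{H}_{E}$ satisfying $D(\rho_{BE}\Vert\sigma_{BE})-D(\rho_{B}\Vert\sigma_{B})\geq-\log F(\rho_{BE},(\mathcal{V}_{BE}\circ\mathcal{R}_{B\rightarrow BE}^{P}\circ\mathcal{U}_{B})(\rho_{B}))$.

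The second step is an algebraic identity identifying the partial-trace Petz map with the channel Petz map. Using the functional-calculus fact $\sigma_{BE}^{1/2}=V\sigma_{S}^{1/2}V^{\dag}$ (valid because $V$ is an isometry, so the nonzero spectrum and eigenvectors of $\sigma_{BE}$ are those of $\sigma_{S}$ transported by $V$) together with the form $\mathcal{N}^{\dag}(\cdot)=V^{\dag}((\cdot)\otimes I_{E})V$ of the adjoint channel, I would verify directly that $\mathcal{R}_{B\rightarrow BE}^{P}(X_{B})=\sigma_{BE}^{1/2}(\sigma_{B}^{-1/2}X_{B}\sigma_{B}^{-1/2})\sigma_{BE}^{1/2}=V\,\mathcal{R}_{\sigma,\mathcal{N}}^{P}(X_{B})\,V^{\dag}$, since $\sigma_{B}=\mathcal{N}_{S\rightarrow B}(\sigma_{S})$. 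Hence, writing $Z\equiv\mathcal{R}_{\sigma,\mathcal{N}}^{P}(\mathcal{U}_{B}(\rho_{B}))\geq0$, the recovered state appearing in the theorem's bound is exactly $V_{BE}VZV^{\dag}V_{BE}^{\dag}$.

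It remains to transport the fidelity on $\mathcal{H}_{B}\otimes\mathcal{H}_{E}$ back to one on $\mathcal{H}_{S}$, which I expect to be the only delicate point. Using $\sqrt{\rho_{BE}}=V\sqrt{\rho_{S}}V^{\dag}$, the identity $\sqrt{WZW^{\dag}}=W\sqrt{Z}W^{\dag}$ for the isometry $W\equiv V_{BE}V$, and the invariances $\Vert VX\Vert_{1}=\Vert X\Vert_{1}$ and $\Vert XW^{\dag}\Vert_{1}=\Vert X\Vert_{1}$ of the trace norm under left and right multiplication by isometries, I would reduce the root fidelity to $\sqrt{F}(\rho_{BE},V_{BE}VZV^{\dag}V_{BE}^{\dag})=\Vert\sqrt{\rho_{S}}\,M\,\sqrt{Z}\,\Vert_{1}$, where $M\equiv V^{\dag}V_{BE}V$ is a contraction on $\mathcal{H}_{S}$. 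The obstacle is that $M$ need not be unitary, since the theorem's unitary $V_{BE}$ on the enlarged space does not in general preserve the range of $V$. I would resolve this by invoking that the operator-norm unit ball is the convex hull of the unitary group, writing $M=\sum_{k}p_{k}V_{S}^{(k)}$ with $V_{S}^{(k)}$ unitary and $p_{k}$ a probability distribution; convexity and homogeneity of the trace norm then give $\Vert\sqrt{\rho_{S}}M\sqrt{Z}\Vert_{1}\leq\max_{k}\Vert\sqrt{\rho_{S}}V_{S}^{(k)}\sqrt{Z}\Vert_{1}=\sqrt{F}(\rho_{S},\mathcal{V}_{S}(Z))$ for the maximizing unitary $V_{S}$. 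Because $-\log$ is decreasing, this upper bound on the root fidelity yields $-\log F(\rho_{BE},V_{BE}VZV^{\dag}V_{BE}^{\dag})\geq-\log F(\rho_{S},\mathcal{V}_{S}(Z))$, and since $\mathcal{V}_{S}(Z)=(\mathcal{V}_{S}\circ\mathcal{R}_{\sigma,\mathcal{N}}^{P}\circ\mathcal{U}_{B})(\mathcal{N}_{S\rightarrow B}(\rho_{S}))$, chaining the three inequalities produces exactly (\ref{eq:mono-rel-ent-remainder}).
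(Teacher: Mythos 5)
Your proposal is correct, and it follows the paper's skeleton (Stinespring dilation, reduction of the relative entropy difference to the partial-trace setting, application of Theorem~\ref{thm:mono-PT}, and the algebraic identity $\mathcal{R}_{B\rightarrow BE}^{P}(X_{B})=V\,\mathcal{R}_{\sigma,\mathcal{N}}^{P}(X_{B})\,V^{\dag}$, which the paper verifies by the same $\sigma_{BE}^{1/2}=V\sigma_{S}^{1/2}V^{\dag}$ and adjoint-channel computation) up to the one genuinely delicate step, where you diverge. The paper handles the conversion of the dilated unitary $V_{BE}$ into a unitary $V_{S}$ on $\mathcal{H}_{S}$ by reopening the proof of Theorem~\ref{thm:mono-PT}: it observes via \eqref{eq:use-for-support} that all relevant supports lie in $\operatorname{supp}(\sigma_{AB}^{\otimes n})$, so Lemma~4.2 of [FR14] can be applied on $[\operatorname{supp}(\sigma_{AB})]^{\otimes n}$, allowing $V_{BE}$ to be chosen without loss of generality as an isometry on the image of the dilation isometry; the compression $V_{S}=\left\langle 0\right\vert _{E^{\prime}}W^{\dag}V_{BE}W\left\vert 0\right\rangle _{E^{\prime}}$ is then itself unitary and the fidelity is \emph{exactly} preserved under the compression. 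You instead use Theorem~\ref{thm:mono-PT} strictly as a black box, concede that $M\equiv V^{\dag}V_{BE}V$ is only a contraction, and invoke the Russo--Dye-type fact (in finite dimensions, via the singular value decomposition and $s=\frac{1}{2}(e^{i\theta}+e^{-i\theta})$, two unitaries suffice) to write $M$ as a convex combination of unitaries, then apply convexity and unitary invariance of the trace norm to get $\sqrt{F}(\rho_{BE},V_{BE}VZV^{\dag}V_{BE}^{\dag})=\Vert\sqrt{\rho_{S}}M\sqrt{Z}\Vert_{1}\leq\sqrt{F}(\rho_{S},\mathcal{V}_{S}(Z))$; since $-\log$ is decreasing, the chain closes in the right direction. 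Both arguments are sound, and the trade-off is real: your route is more modular and avoids the support/extension bookkeeping entirely, while the paper's route yields an equality rather than an inequality at the compression step and ties $V_{S}$ concretely to $V_{BE}$, structural information that is exploited in follow-up work on the form of the rotated Petz map.
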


\begin{proof}
[Proof of Theorem~\ref{cor:mono}]We begin by recalling that any quantum
channel can be realized by tensoring in an ancilla system prepared in a
fiducial state, acting with a unitary on the input and ancilla, and then
performing a partial trace \cite{S55}. That is, for any channel $\mathcal{N}%
_{S\rightarrow B}$, there exists a unitary $W_{SE^{\prime}\rightarrow BE}$
with input systems $SE^{\prime}$ and output systems $BE$ such that%
\begin{equation}
\mathcal{N}_{S\rightarrow B}\left(  \rho_{S}\right)  =\operatorname{Tr}%
_{E}\left\{  W_{SE^{\prime}\rightarrow BE}\left(  \rho_{S}\otimes\left\vert
0\right\rangle \left\langle 0\right\vert _{E^{\prime}}\right)  W_{SE^{\prime
}\rightarrow BE}^{\dag}\right\}  .\label{eq:unitary-extension}%
\end{equation}
For simplicity, we abbreviate the unitary $W_{SE^{\prime}\rightarrow BE}$ as
$W$ in what follows. Let $\rho_{BE}$ and $\sigma_{BE}$ be defined as%
\begin{align}
\rho_{BE} &  \equiv W\left(  \rho_{S}\otimes\left\vert 0\right\rangle
\left\langle 0\right\vert _{E^{\prime}}\right)  W^{\dag},\\
\sigma_{BE} &  \equiv W\left(  \sigma_{S}\otimes\left\vert 0\right\rangle
\left\langle 0\right\vert _{E^{\prime}}\right)  W^{\dag},
\end{align}
so that%
\begin{equation}
\mathcal{N}_{S\rightarrow B}\left(  \rho_{S}\right)  =\rho_{B}%
,\ \ \ \ \ \ \ \ \mathcal{N}_{S\rightarrow B}\left(  \sigma_{S}\right)
=\sigma_{B}.
\end{equation}
The Kraus operators of $\mathcal{N}_{S\rightarrow B}$ are given as%
\begin{align}
\mathcal{N}_{S\rightarrow B}\left(  \rho_{S}\right)   &  =\sum_{i}\left\langle
i\right\vert _{E}W\left(  \rho_{S}\otimes\left\vert 0\right\rangle
\left\langle 0\right\vert _{E^{\prime}}\right)  W^{\dag}\left\vert
i\right\rangle _{E}\\
&  =\sum_{i}\left\langle i\right\vert _{E}W\left\vert 0\right\rangle
_{E^{\prime}}\rho_{S}\left\langle 0\right\vert _{E^{\prime}}W^{\dag}\left\vert
i\right\rangle _{E},
\end{align}
so that the adjoint map is given by%
\begin{equation}
\mathcal{N}^{\dag}\left(  \omega_{B}\right)  =\sum_{i}\left\langle
0\right\vert _{E^{\prime}}W^{\dag}\left\vert i\right\rangle _{E}\omega
_{B}\left\langle i\right\vert _{E}W\left\vert 0\right\rangle _{E^{\prime}}.
\end{equation}
Furthermore, we have that%
\begin{align}
&  D\left(  \rho_{S}\Vert\sigma_{S}\right)  -D\left(  \mathcal{N}%
_{S\rightarrow B}\left(  \rho_{S}\right)  \Vert\mathcal{N}_{S\rightarrow
B}\left(  \sigma_{S}\right)  \right)  \nonumber\\
&  =D\left(  \rho_{S}\otimes\left\vert 0\right\rangle \left\langle
0\right\vert _{E^{\prime}}\Vert\sigma_{S}\otimes\left\vert 0\right\rangle
\left\langle 0\right\vert _{E^{\prime}}\right)  -D\left(  \rho_{B}\Vert
\sigma_{B}\right)  \\
&  =D\left(  W\left(  \rho_{S}\otimes\left\vert 0\right\rangle \left\langle
0\right\vert _{E^{\prime}}\right)  W^{\dag}\Vert W\left(  \sigma_{S}%
\otimes\left\vert 0\right\rangle \left\langle 0\right\vert _{E^{\prime}%
}\right)  W^{\dag}\right)  -D\left(  \rho_{B}\Vert\sigma_{B}\right)  \\
&  =D\left(  \rho_{BE}\Vert\sigma_{BE}\right)  -D\left(  \rho_{B}\Vert
\sigma_{B}\right)  .\label{eq:reduction-to-PT}%
\end{align}
Applying Theorem~\ref{thm:mono-PT}, we know that a lower bound on
(\ref{eq:reduction-to-PT}) is%
\begin{equation}
-\log F\left(  \rho_{BE},V_{BE}\sigma_{BE}^{1/2}\sigma_{B}^{-1/2}U_{B}\rho
_{B}U_{B}^{\dag}\sigma_{B}^{-1/2}\sigma_{BE}^{1/2}V_{BE}^{\dag}\right)
,\label{eq:remainder-term}%
\end{equation}
for some unitaries $V_{BE}$ and $U_{B}$. Without loss of generality, $V_{BE}$
can be assumed to be
an isometry on the image of  $W_{SE^{\prime}\rightarrow
BE}\left\vert 0\right\rangle _{E^{\prime}}$. We justify this as follows. Let
$P_{n}$ denote the support projection of $\rho_{AB}^{\otimes n}$. By
(\ref{eq:use-for-support}), since the supports of $\Pi_{b}\Pi_{B^{n}}P_{n}$,
$\rho_{AB}^{\otimes n}$, and $\left[  \sigma_{AB}^{1/2}\sigma_{B}^{-1/2}%
U_{B}\rho_{B}U_{B}^{\dag}\sigma_{B}^{-1/2}\sigma_{AB}^{1/2}\right]  ^{\otimes
n}$ are all contained in the support of $\sigma_{AB}^{\otimes n}$, one can
apply \cite[Lemma~4.2]{FR14} on the Hilbert space $\left[  \operatorname{supp}\left(
\sigma_{AB}\right)  \right]  ^{\otimes n}=\operatorname{supp}\left(  \sigma_{AB}^{\otimes
n}\right)  $ to obtain a unitary $V_{AB}$ on this space, which may be
extended to a unitary on the space $\mathcal{H}_{AB}^{\otimes n}$ in an
arbitrary way. Hence, the maximization in (\ref{eq:opt-over-U}) can be
restricted to unitaries $V_{AB}$ that are isometries on the support of
$\sigma_{AB}$. Thus, we indeed have that $V_{BE}$ is an isometry on the
support of $\sigma_{AB}$, which can be extended to an isometry on the image of
$W_{SE^{\prime}\rightarrow BE}\left\vert 0\right\rangle _{E^{\prime}}$.

Let us now unravel the term $\sigma_{BE}^{1/2}\sigma_{B}^{-1/2}U_{B}\rho
_{B}U_{B}^{\dag}\sigma_{B}^{-1/2}\sigma_{BE}^{1/2}$ in the second argument
above. Letting
\begin{equation}
\omega_{B}\equiv\left(  \mathcal{N}_{S\rightarrow B}\left(  \sigma_{S}\right)
\right)  ^{-1/2}U_{B}\mathcal{N}_{S\rightarrow B}\left(  \rho_{S}\right)
U_{B}^{\dag}\left(  \mathcal{N}_{S\rightarrow B}\left(  \sigma_{S}\right)
\right)  ^{-1/2},
\end{equation}
we then have that%
\begin{align}
&  \sigma_{BE}^{1/2}\sigma_{B}^{-1/2}U_{B}\rho_{B}U_{B}^{\dag}\sigma
_{B}^{-1/2}\sigma_{BE}^{1/2}\nonumber\\
&  =\left(  W\left(  \sigma_{S}\otimes\left\vert 0\right\rangle \left\langle
0\right\vert _{E^{\prime}}\right)  W^{\dag}\right)  ^{1/2}\omega_{B}\left(
W\left(  \sigma_{S}\otimes\left\vert 0\right\rangle \left\langle 0\right\vert
_{E^{\prime}}\right)  W^{\dag}\right)  ^{1/2}\\
&  =W\left(  \sigma_{S}\otimes\left\vert 0\right\rangle \left\langle
0\right\vert _{E^{\prime}}\right)  ^{1/2}W^{\dag}\omega_{B}W\left(  \sigma
_{S}\otimes\left\vert 0\right\rangle \left\langle 0\right\vert _{E^{\prime}%
}\right)  ^{1/2}W^{\dag}\\
&  =W\left(  \sigma_{S}^{1/2}\otimes\left\vert 0\right\rangle \left\langle
0\right\vert _{E^{\prime}}\right)  W^{\dag}\omega_{B}W\left(  \sigma_{S}%
^{1/2}\otimes\left\vert 0\right\rangle \left\langle 0\right\vert _{E^{\prime}%
}\right)  W^{\dag}\\
&  =W\left(  \sigma_{S}^{1/2}\otimes\left\vert 0\right\rangle \left\langle
0\right\vert _{E^{\prime}}\right)  W^{\dag}\left[  \omega_{B}\otimes
I_{E}\right]  W\left(  \sigma_{S}^{1/2}\otimes\left\vert 0\right\rangle
\left\langle 0\right\vert _{E^{\prime}}\right)  W^{\dag}.
\end{align}
Continuing, the last line above is equal to%
\begin{align}
&  W\left(  \sigma_{S}^{1/2}\otimes\left\vert 0\right\rangle \left\langle
0\right\vert _{E^{\prime}}\right)  W^{\dag}\left[  \omega_{B}\otimes\sum
_{i}\left\vert i\right\rangle \left\langle i\right\vert _{E}\right]  W\left(
\sigma_{S}^{1/2}\otimes\left\vert 0\right\rangle \left\langle 0\right\vert
_{E^{\prime}}\right)  W^{\dag}\\
&  =W\left[  \left(  \sigma_{S}^{1/2}\left[  \sum_{i}\left\langle 0\right\vert
_{E^{\prime}}W^{\dag}\left\vert i\right\rangle _{E}\ \omega_{B}\ \left\langle
i\right\vert _{E}W\left\vert 0\right\rangle _{E^{\prime}}\right]  \sigma
_{S}^{1/2}\right)  \otimes\left\vert 0\right\rangle \left\langle 0\right\vert
_{E^{\prime}}\right]  W^{\dag}\\
&  =W\left(  \left[  \sigma_{S}^{1/2}\mathcal{N}^{\dag}\left[  \left(
\mathcal{N}_{S\rightarrow B}\left(  \sigma_{S}\right)  \right)  ^{-1/2}%
U_{B}\mathcal{N}_{S\rightarrow B}\left(  \rho_{S}\right)  U_{B}^{\dag}\left(
\mathcal{N}_{S\rightarrow B}\left(  \sigma_{S}\right)  \right)  ^{-1/2}%
\right]  \sigma_{S}^{1/2}\right]  \otimes\left\vert 0\right\rangle
\left\langle 0\right\vert _{E^{\prime}}\right)  W^{\dag} .\label{eq:almost-done}%
\end{align}
The Petz recovery map is defined as%
\begin{equation}
\mathcal{R}_{\sigma,\mathcal{N}}\left(  \cdot\right)  \equiv\sigma_{S}%
^{1/2}\mathcal{N}^{\dag}\left[  \left(  \mathcal{N}_{S\rightarrow B}\left(
\sigma_{S}\right)  \right)  ^{-1/2}\left(  \cdot\right)  \left(
\mathcal{N}_{S\rightarrow B}\left(  \sigma_{S}\right)  \right)  ^{-1/2}%
\right]  \sigma_{S}^{1/2}.
\end{equation}
Then by inspection, (\ref{eq:almost-done}) is equal to%
\begin{equation}
W\left(  \left[  \mathcal{R}_{\sigma,\mathcal{N}}\left(  U_{B}\mathcal{N}%
_{S\rightarrow B}\left(  \rho_{S}\right)  U_{B}^{\dag}\right)  \right]
\otimes\left\vert 0\right\rangle \left\langle 0\right\vert _{E^{\prime}%
}\right)  W^{\dag}.
\end{equation}
So the fidelity in the remainder term of (\ref{eq:remainder-term}) is%
\begin{align}
&  F\left(  \rho_{BE},V_{BE}W\left(  \left[  \mathcal{R}_{\sigma,\mathcal{N}%
}\left(  U_{B}\mathcal{N}\left(  \rho\right)  U_{B}^{\dag}\right)  \right]
\otimes\left\vert 0\right\rangle \left\langle 0\right\vert _{E^{\prime}%
}\right)  W^{\dag}\left(  V_{BE}\right)  ^{\dag}\right)  \nonumber\\
&  =F\left(  W\left(  \rho_{S}\otimes\left\vert 0\right\rangle \left\langle
0\right\vert _{E^{\prime}}\right)  W^{\dag},V_{BE}W\left(  \left[
\mathcal{R}_{\sigma,\mathcal{N}}\left(  U_{B}\mathcal{N}\left(  \rho
_{S}\right)  U_{B}^{\dag}\right)  \right]  \otimes\left\vert 0\right\rangle
\left\langle 0\right\vert _{E^{\prime}}\right)  W^{\dag}\left(  V_{BE}\right)
^{\dag}\right)  \\
&  =F\left(  \rho_{S},\left\langle 0\right\vert _{E^{\prime}}W^{\dag}%
V_{BE}W\left(  \left[  \mathcal{R}_{\sigma,\mathcal{N}}\left(  U_{B}%
\mathcal{N}\left(  \rho_{S}\right)  U_{B}^{\dag}\right)  \right]
\otimes\left\vert 0\right\rangle \left\langle 0\right\vert _{E^{\prime}%
}\right)  W^{\dag}\left(  V_{BE}\right)  ^{\dag}W\left\vert 0\right\rangle
_{E^{\prime}}\right)  \\
&  =F\left(  \rho_{S},V_{S}\left(  \mathcal{R}_{\sigma,\mathcal{N}}\left(
U_{B}\mathcal{N}\left(  \rho_{S}\right)  U_{B}^{\dag}\right)  \right)
V_{S}^{\dag}\right)  .
\end{align}
Given that $V_{BE}$ acts only on the image of the isometry $W_{SE^{\prime
}\rightarrow BE}\left\vert 0\right\rangle _{E^{\prime}}$, the second equality
follows because in this case the fidelity is invariant under the partial
isometry $\left\langle 0\right\vert _{E^{\prime}}W^{\dag}$. The last
equality follows because we can define a unitary $V_{S}$ acting on the input
space as%
\begin{equation}
V_{S}\equiv\left\langle 0\right\vert _{E^{\prime}}W^{\dag}V_{BE}W\left\vert
0\right\rangle _{E^{\prime}}.
\end{equation}
So the final remainder term for monotonicity of relative entropy is%
\begin{equation}
D\left(  \rho_{S}\Vert\sigma_{S}\right)  -D\left(  \mathcal{N}\left(  \rho
_{S}\right)  \Vert\mathcal{N}\left(  \sigma_{S}\right)  \right)  \geq-\log
F\left(  \rho_{S},V_{S}\left(  \mathcal{R}_{\sigma,\mathcal{N}}\left(
U_{B}\mathcal{N}\left(  \rho_{S}\right)  U_{B}^{\dag}\right)  \right)
V_{S}^{\dag}\right)  .
\end{equation}

\end{proof}

\begin{remark}
\label{rmk:mono} Suppose in Theorem~\ref{cor:mono}\ that $\sigma_{S}$ is a
density operator. It remains open to quantify the performance of the rotated
Petz recovery map $\mathcal{V}_{S}\circ\mathcal{R}_{\sigma,\mathcal{N}}%
^{P}\circ\mathcal{U}_{B}$ on the state $\mathcal{N}_{S\rightarrow B}\left(
\sigma_{S}\right)  $.
%In particular, if the unitary channels $\mathcal{U}_{B}$
%and $\mathcal{V}_{S}$ were not necessary (with each instead being equal to the
%identity channel), then it would be possible to do so. This form of the
%recovery map was previously conjectured in \cite[Consequence 27]{SBW14} in
%terms of the following inequality:%
%\begin{equation}
%D\left(  \rho_{S}\Vert\sigma_{S}\right)  -D\left(  \mathcal{N}_{S\rightarrow
%B}\left(  \rho_{S}\right)  \Vert\mathcal{N}_{S\rightarrow B}\left(  \sigma
%_{S}\right)  \right)  \geq-\log F\left(  \rho_{S},\mathcal{R}_{\sigma
%,\mathcal{N}}^{P}\left(  \mathcal{N}_{S\rightarrow B}\left(  \rho_{S}\right)
%\right)  \right)  . \label{eq:conj-full-mono}%
%\end{equation}
%If this conjecture is true, then one could perform the Petz recovery map on
%system $B$ and be guaranteed a perfect recovery of $\sigma_{S}$ if the state
%of $B$ is $\mathcal{N}_{S\rightarrow B}\left(  \sigma_{S}\right)  $, while
%having a performance limited by \eqref{eq:conj-full-mono} if the state of $B$
%is $\mathcal{N}_{S\rightarrow B}\left(  \rho_{S}\right)  $.
\end{remark}

\section{Equivalence of relative entropy inequalities with remainder terms}

As discussed in the introduction as well as in Remarks~\ref{rmk:partial_trace}
and \ref{rmk:mono}, it would be desirable to have refinements of the
inequalities in \eqref{eq:mono-rel-ent} and
\eqref{eq:mono-partial}-\eqref{eq:concavity} in terms of the Petz recovery map
(and not merely in terms of a rotated Petz recovery map). Here, we establish
the following equivalence result, depicted in Figure~\ref{fig:circle-figure}.
The remainder terms are given in terms of the square of the Bures distance
between two density operators \cite{Bures1969}, defined as%
\begin{equation}
D_{B}^{2}\left(  \rho,\sigma\right)  \equiv2\left(  1-\sqrt{F\left(
\rho,\sigma\right)  }\right)  ,
\end{equation}
where $F\left(  \rho,\sigma\right)  $ is the quantum fidelity.

\begin{theorem}
\label{thm:circle}The following inequalities with remainder terms are
equivalent (however it is an open question to determine whether any single one
of them is true):

\begin{enumerate}
\item \textbf{Strong subadditivity of entropy}. Let $\omega_{ABC}$ be a
tripartite density operator such that $\omega_{C}$ is positive definite. Then
\begin{equation}
I(A;B|C)_{\omega}\geq D_{B}^{2}\left(  \omega_{ABC},\mathcal{R}_{C\rightarrow
AC}^{P}(\omega_{BC})\right)  ,
\end{equation}
where $\mathcal{R}_{C\rightarrow AC}^{P}(\cdot)\equiv\omega_{AC}^{1/2}%
\omega_{C}^{-1/2}(\cdot)\omega_{C}^{-1/2}\omega_{AC}^{1/2}$ denotes the Petz
recovery channel.

\item \textbf{Concavity of conditional entropy}. Let $p_{X}\left(  x\right)  $
be a probability distribution characterizing the ensemble $\left\{
p_{X}\left(  x\right)  ,\rho_{AB}^{x}\right\}  $ with bipartite density
operators $\rho_{AB}^{x}$. Let $\overline{\rho}_{AB}\equiv\sum_{x}p_{X}%
(x)\rho_{AB}^{x}$ such that $\overline{\rho}_{B}$ is positive definite. Then
\begin{equation}
H(A|B)_{\overline{\rho}}-\sum_{x}p_{X}(x)H(A|B)_{\rho^{x}}\geq\sum_{x}%
p_{X}(x)D_{B}^{2}(\rho_{AB}^{x},\overline{\rho}_{AB}^{1/2}\overline{\rho}%
_{B}^{-1/2}\rho_{B}^{x}\overline{\rho}_{B}^{-1/2}\overline{\rho}_{AB}^{1/2}).
\label{eq:concavity-remainder-1}%
\end{equation}

\item \textbf{Monotonicity of relative entropy with respect to partial trace}. Let
$\rho_{AB}$ and $\sigma_{AB}$ be bipartite density operators such that
$\operatorname{supp}(\rho_{AB})\subseteq\operatorname{supp}(\sigma_{AB})$ and
$\sigma_{B}$ is positive definite. Then
\begin{equation}
D(\rho_{AB}\Vert\sigma_{AB})-D(\rho_{B}\Vert\sigma_{B})\geq D_{B}^{2}\left(
\rho_{AB},\mathcal{R}_{\sigma,\mathrm{Tr}_{A}}^{P}(\rho_{B})\right)  ,
\end{equation}
where $\mathcal{R}_{\sigma,\mathrm{Tr}_{A}}^{P}(\cdot)\equiv\sigma_{AB}%
^{1/2}\sigma_{B}^{-1/2}(\cdot)\sigma_{B}^{-1/2}\sigma_{AB}^{1/2}$ denotes the
Petz recovery channel with respect to $\sigma_{AB}$ and $\mathrm{Tr}_{A}$.

\item \textbf{Joint convexity of relative entropy}. Let $p_{X}\left(
x\right)  $ be a probability distribution characterizing the ensembles
$\left\{  p_{X}\left(  x\right)  ,\rho_{x}\right\}  $, and $\left\{
p_{X}\left(  x\right)  ,\sigma_{x}\right\}  $ with $\rho_{x}$ and $\sigma_{x}$
density operators such that $\operatorname{supp}(\rho_{x})\subseteq
\operatorname{supp}(\sigma_{x})$. Let $\overline{\rho}\equiv\sum_{x}%
p_{X}\left(  x\right)  \rho_{x}$ and $\overline{\sigma}\equiv\sum_{x}%
p_{X}\left(  x\right)  \sigma_{x}$ such that $\overline{\sigma}$ is positive
definite. Then
\begin{equation}
\sum_{x}p_{X}\left(  x\right)  D\left(  \rho_{x}\Vert\sigma_{x}\right)
-D\left(  \overline{\rho}\Vert\overline{\sigma}\right)  \geq\sum_{x}%
p_{X}\left(  x\right)  D_{B}^{2}\left(  \rho_{x},\sigma_{x}^{1/2}\left(
\overline{\sigma}\right)  ^{-1/2}\overline{\rho}\left(  \overline{\sigma
}\right)  ^{-1/2}\sigma_{x}^{1/2}\right)  .
\end{equation}

\item \textbf{Monotonicity of relative entropy}. Let $\rho$ and $\sigma$ be
density operators such that $\operatorname{supp}(\rho)\subseteq
\operatorname{supp}(\sigma)$, and $\mathcal{N}$ a CPTP map such that
$\mathcal{N}(\sigma)$ is positive definite. Then
\begin{equation}
D(\rho\Vert\sigma)-D(\mathcal{N}(\rho)\Vert\mathcal{N}(\sigma))\geq D_{B}%
^{2}\left(  \rho,\mathcal{R}_{\sigma,\mathcal{N}}^{P}(\rho)\right)  ,
\end{equation}
where $\mathcal{R}_{\sigma,\mathcal{N}}^{P}(\cdot)\equiv\sigma^{1/2}%
\mathcal{N}^{\dag}\left(  \left[  \mathcal{N}\left(  \sigma\right)  \right]
^{-1/2}(\cdot)\left[  \mathcal{N}\left(  \sigma\right)  \right]
^{-1/2}\right)  \sigma^{1/2}$ denotes the Petz recovery channel with respect
to $\sigma$ and $\mathcal{N}$.
\end{enumerate}
\end{theorem}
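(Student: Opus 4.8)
The plan is to prove the five statements mutually equivalent by exhibiting a directed cycle of implications in which every arrow converts the Bures-distance remainder of its hypothesis into \emph{exactly} the remainder of its conclusion. I would sort the arrows into three families: \textbf{instance} reductions, in which a product-reference inequality is read off from a general one; \textbf{dilation} reductions, based on the Stinespring representation and on a unitary-design twirl; and a single \textbf{flattening} reduction, which I expect to be the crux. Concretely I would establish $(5)\Rightarrow(3)$ and $(3)\Rightarrow(5)$ (so $3\Leftrightarrow5$), $(3)\Rightarrow(4)$ and $(4)\Rightarrow(3)$ (so $3\Leftrightarrow4$), $(5)\Rightarrow(1)$, $(1)\Rightarrow(2)$, and finally the escape $(2)\Rightarrow(3)$; chaining $(1)\Rightarrow(2)\Rightarrow(3)\Rightarrow(5)\Rightarrow(1)$ closes a cycle through $1,2,3,5$, and the two-sided equivalence $3\Leftrightarrow4$ brings in item $4$.

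The instance and dilation arrows all carry the Petz remainder over verbatim. Monotonicity $(5)$ gives partial trace $(3)$ by taking $\mathcal{N}=\operatorname{Tr}_A$, whereupon the two recovery maps literally coincide; strong subadditivity $(1)$ is the instance of $(5)$ with $\sigma=\tfrac{1}{d_B}\,\omega_{AC}\otimes I_B$, the normalization $d_B$ cancelling in the difference $D(\rho\Vert\sigma)-D(\mathcal{N}\rho\Vert\mathcal{N}\sigma)$ and inside the Petz map, which collapses to $\mathcal{R}^{P}_{C\to AC}$. For $(1)\Rightarrow(2)$ and $(3)\Rightarrow(4)$ I would append a classical register $X$ and form the block-diagonal states $\sum_x p_X(x)\,\proj{x}\otimes(\cdot)$: the relevant Petz map acts trivially on $X$, so the recovered operator is again block-diagonal with the \emph{same} weights $p_X(x)$, and then root fidelity factorizes, $\sqrt{F}=\sum_x p_X(x)\sqrt{F}(\cdots)$, giving $D_B^{2}=\sum_x p_X(x)\,D_B^{2}(\cdots)$ — precisely the averaged remainders of items $2$ and $4$, with the stated per-block operators (e.g.\ $\overline\rho_{AB}^{1/2}\overline\rho_{B}^{-1/2}\rho_B^{x}\overline\rho_{B}^{-1/2}\overline\rho_{AB}^{1/2}$ for concavity). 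The equivalence $3\Leftrightarrow5$ uses the Stinespring isometry $W$: this is exactly the computation in the proof of Corollary~\ref{cor:mono}, run with the rotating unitaries set to the identity (legitimate because we now \emph{assume} the Petz form of $(3)$), so the remainder reduces cleanly to $D_B^{2}(\rho,\mathcal{R}^{P}_{\sigma,\mathcal{N}}(\rho))$. Finally $(4)\Rightarrow(3)$ follows from a twirl over the $d_A^{2}$ Heisenberg--Weyl operators $\{P_j\}$ (a unitary $1$-design): with $\rho_j=(P_j\otimes I)\rho_{AB}(P_j\otimes I)^{\dagger}$, $\sigma_j=(P_j\otimes I)\sigma_{AB}(P_j\otimes I)^{\dagger}$ and uniform weights, one has $\overline\rho=\tfrac{I_A}{d_A}\otimes\rho_B$, $\overline\sigma=\tfrac{I_A}{d_A}\otimes\sigma_B$, so the left side becomes $D(\rho_{AB}\Vert\sigma_{AB})-D(\rho_B\Vert\sigma_B)$; since $\sigma_j^{1/2}\overline\sigma^{-1/2}\overline\rho\,\overline\sigma^{-1/2}\sigma_j^{1/2}=(P_j\otimes I)\,\mathcal{R}^{P}_{\sigma,\operatorname{Tr}_A}(\rho_B)\,(P_j\otimes I)^{\dagger}$ and the fidelity is unitarily invariant, every twirled term is identical and the average collapses to $D_B^{2}(\rho_{AB},\mathcal{R}^{P}_{\sigma,\operatorname{Tr}_A}(\rho_B))$, which is item $3$.

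The remaining arrow $(2)\Rightarrow(3)$ is the escape that actually closes the cycle, and I expect it to be the main obstacle. The difficulty is structural: in concavity (and in SSA) the second argument of every relative entropy is forced to be of product form (a marginal tensored with the identity), whereas item $3$ carries an arbitrary reference operator $\sigma$. To manufacture a general $\sigma$ I would \emph{flatten} it: writing the spectra of $\sigma_{AB}$ and $\sigma_B$ with rational eigenvalues $k_i/M$, I embed the problem into an enlarged Hilbert space on which these operators are represented by \emph{maximally mixed} states supported on dimension-$M$ subspaces, so that relative entropy to them becomes, up to an additive constant that cancels in the difference, an ordinary (conditional) entropy to which the entropic inequality $(2)$ applies, the uniform mixture over the flattening register playing the role of the ensemble index; irrational spectra are then recovered by continuity of $D$, $F$, and $D_B^{2}$. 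This is exactly the finite, non-asymptotic form of the relative-typical-subspace device of \cite{BS12}, used here only to \emph{represent} $\sigma$, so no circularity arises.

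The genuinely delicate point inside this escape is the bookkeeping of the recovery map. Matching the left-hand sides is the classical part; what must be verified in addition is that, under the flattening-and-limit, the Petz map of the entropic statement converges to $\mathcal{R}^{P}_{\sigma,\operatorname{Tr}_A}$, so that the \emph{remainders} also coincide in the limit. Keeping track of the square-root factors $\sigma_{AB}^{1/2}$, $\sigma_B^{-1/2}$ and of their commutation with the flattening projectors is the chief technical burden; by contrast, the normalization and positive-definiteness checks along each embedding (that $\omega_C$, $\sigma_B$, and $\mathcal{N}(\sigma)$ remain positive definite, which is needed for the inverses defining the Petz maps) are routine. Once the escape is in place, chaining the instance, dilation, and flattening arrows yields a strongly connected diagram on all five inequalities, establishing their equivalence.
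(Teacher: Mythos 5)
Most of your arrows are sound and essentially coincide with (or are harmless variants of) the paper's: $5\Rightarrow3$ is the same instance reduction; your $5\Rightarrow1$ with $\sigma=\tfrac{1}{d_B}\,\omega_{AC}\otimes I_B$ works just as well as the paper's choice $\sigma=\omega_{AC}\otimes\omega_B$ (the normalization cancels both in the entropy difference and in the Petz map); your $1\Rightarrow2$ via the flag state $\sum_x p_X(x)\proj{x}\otimes\rho^x_{AB}$ is exactly the paper's step, including the block decomposition $\sqrt{F}=\sum_x p_X(x)\sqrt{F}(\cdots)$; and your block-diagonal $3\Rightarrow4$, Heisenberg--Weyl twirl $4\Rightarrow3$, and Stinespring computation for $3\Rightarrow5$ (which is indeed the proof of Corollary~\ref{cor:mono} with the rotating unitaries set to the identity) are the standard content behind the paper's appeal to a variation of \cite[Consequence 28]{SBW14}. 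In all of these the Bures remainder transfers exactly, as you claim.

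The genuine gap is the arrow $2\Rightarrow3$, which is the only route in your diagram from $\{1,2\}$ back into $\{3,4,5\}$; without it you prove only $3\Leftrightarrow4\Leftrightarrow5$ and $5\Rightarrow1\Rightarrow2$, so items 1 and 2 remain consequences, not equivalents. Your proposed ``finite, non-asymptotic flattening'' does not exist in the quantum setting: spectra are invariant under unitaries and isometric embeddings, so no such embedding can represent a generic $\sigma_{AB}$ as a maximally mixed operator, let alone represent $\sigma_{AB}$ and $I_A\otimes\sigma_B$ (which generically do not commute, so have no compatible eigenbases) simultaneously; and the natural block-spreading channel that does flatten $\sigma$ exactly is irreversible on a non-commuting $\rho$ (it damps off-diagonal matrix elements), so it strictly decreases $D(\rho_{AB}\Vert\sigma_{AB})$ and the left-hand sides fail to match. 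This is precisely why the device of \cite{BS12} is inherently asymptotic --- and the asymptotic route is exactly the machinery of Theorem~\ref{thm:mono-PT}, which produces only \emph{rotated} Petz maps via \cite[Lemma 4.2]{FR14}; if your flattening-plus-limit preserved the exact Petz remainder, it would resolve the paper's stated open question, which should have been a red flag. You also explicitly defer the remainder bookkeeping, which is the entire difficulty. The paper's actual $2\Rightarrow3$ is finite and exact, following \cite[Section 3-E]{LR73}: apply inequality 2 to the two-element ensemble with weights $\tfrac{1}{x+1},\tfrac{x}{x+1}$ and states $\sigma_{AB},\rho_{AB}$, use the homogeneity $H(A|B)_{xG}=xH(A|B)_{G}$, divide by $x$, and let $x\searrow0$. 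The left side differentiates to $-\operatorname{Tr}\left\{\rho_{AB}\log\sigma_{AB}\right\}+\operatorname{Tr}\left\{\rho_{B}\log\sigma_{B}\right\}$; in the remainder, the $\rho$-fidelity term converges to the desired Petz fidelity, and the key lemma --- proved by explicit operator-derivative calculus via Theorem V.3.3 of \cite{B97} --- is that
\begin{equation}
\left.\frac{d}{dx}\sqrt{F}\left(\sigma_{AB},\xi_{AB}^{1/2}\xi_{B}^{-1/2}\sigma_{B}\xi_{B}^{-1/2}\xi_{AB}^{1/2}\right)\right\vert_{x=0}=0,
\end{equation}
because this fidelity equals $1$, hence is maximal, at $x=0$. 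That vanishing-derivative computation, entirely absent from your proposal, is what delivers the clean Petz remainder $D_{B}^{2}\left(\rho_{AB},\sigma_{AB}^{1/2}\sigma_{B}^{-1/2}\rho_{B}\sigma_{B}^{-1/2}\sigma_{AB}^{1/2}\right)$ and closes the circle.
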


%

%TCIMACRO{\FRAME{ftbpFU}{5.047in}{2.7899in}{0pt}{\Qcb{It is well known that all
%of the above fundamental entropy inequalities are equivalent (see, e.g.,
%\cite{R02}). Theorem~\ref{thm:circle} extends this circle of equivalences to
%apply to refinements of these inequalities in terms of the Petz recovery
%map.}}{\Qlb{fig:circle-figure}}{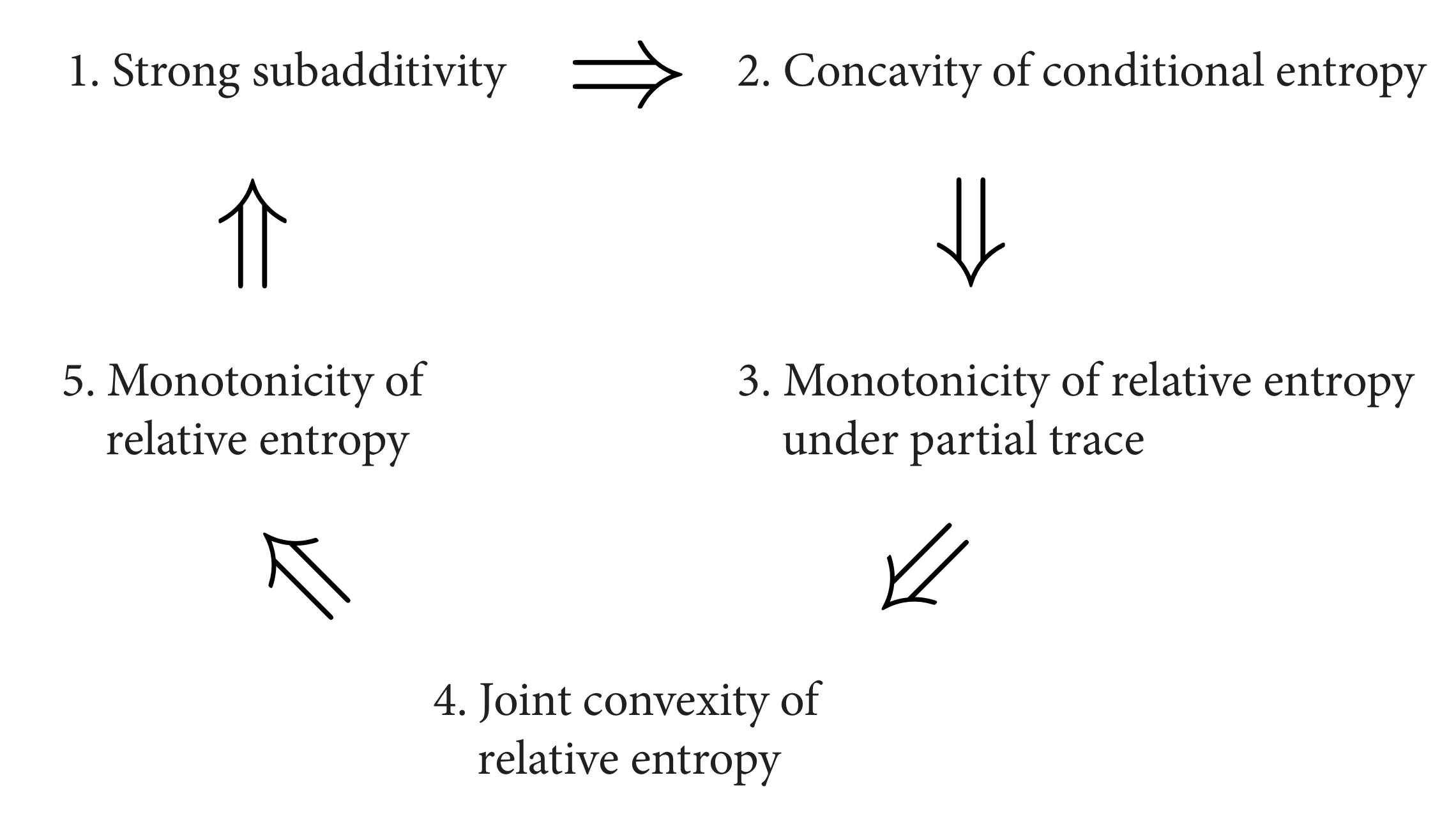}%
%{\special{ language "Scientific Word";  type "GRAPHIC";
%maintain-aspect-ratio TRUE;  display "USEDEF";  valid_file "F";
%width 5.047in;  height 2.7899in;  depth 0pt;  original-width 8.553in;
%original-height 4.7063in;  cropleft "0";  croptop "1";  cropright "1";
%cropbottom "0";  filename 'circle-figure.pdf';file-properties "XNPEU";}}}%
%BeginExpansion
\begin{figure}
[ptb]
\begin{center}
\includegraphics[
natheight=4.706300in,
natwidth=8.553000in,
height=2.7899in,
width=5.047in
]%
{circle-figure.pdf}%
\caption{It is well known that all of the above fundamental entropy
inequalities are equivalent (see, e.g., \cite{R02}). Theorem~\ref{thm:circle}
extends this circle of equivalences to apply to refinements of these
inequalities in terms of the Petz recovery map.}%
\label{fig:circle-figure}%
\end{center}
\end{figure}
%EndExpansion

\begin{proof}
For the proof, we abbreviate the square root of the fidelity $F$ as the root
fidelity $\sqrt{F}$. We can easily see that 5 $\Rightarrow$ 3, and from a
variation of the development in \cite[Consequence 28]{SBW14}, we obtain 3
$\Rightarrow$ 4 $\Rightarrow$ 5, leading to 3 $\Leftrightarrow$ 4
$\Leftrightarrow$ 5.\footnote{Note that \cite[Consequence 28]{SBW14}
establishes the circle 3 $\Leftrightarrow$ 4 $\Leftrightarrow$ 5 with a
remainder term of $-\log F$.} We can get 5 $\Rightarrow$ 1
by choosing $\rho=\omega_{ABC}$, $\sigma=\omega_{AC}\otimes\omega_{B}$, and
$\mathcal{N}=\operatorname{Tr}_{A}$, so that
\begin{align}
&  \sigma^{1/2}\mathcal{N}^{\dag}\left(  \left[  \mathcal{N}\left(
\sigma\right)  \right]  ^{-1/2}(\cdot)\left[  \mathcal{N}\left(
\sigma\right)  \right]  ^{-1/2}\right)  \sigma^{1/2}\nonumber\\
&  =\left[  \omega_{AC}\otimes\omega_{B}\right]  ^{1/2}\left[  \left(  \left[
\omega_{C}\otimes\omega_{B}\right]  ^{-1/2}(\cdot)\left[  \omega_{C}%
\otimes\omega_{B}\right]  ^{-1/2}\right)  \otimes I_{A}\right]  \left[
\omega_{AC}\otimes\omega_{B}\right]  ^{1/2}\\
&  =\omega_{AC}^{1/2}\omega_{C}^{-1/2}(\cdot)\omega_{C}^{-1/2}\omega
_{AC}^{1/2}.
\end{align}
Then%
\begin{align}
I\left(  A;B|C\right)  _{\omega} &  =D(\omega_{ABC}\Vert\omega_{AC}%
\otimes\omega_{B})-D(\omega_{BC}\Vert\omega_{C}\otimes\omega_{B})\\
&  \geq2\left(  1-\sqrt{F}\left(  \omega_{ABC},\mathcal{R}_{\sigma
,\mathcal{N}}^{P}(\omega_{BC})\right)  \right)  \\
&  =2\left(  1-\sqrt{F}\left(  \omega_{ABC},\omega_{AC}^{1/2}\omega_{C}%
^{-1/2}\omega_{BC}\omega_{C}^{-1/2}\omega_{AC}^{1/2}\right)  \right)  .
\end{align}
The implication 1 $\Rightarrow$ 2 follows by choosing%
\begin{equation}
\theta_{XAB}\equiv\sum_{x}p_{X}\left(  x\right)  \left\vert x\right\rangle
\left\langle x\right\vert _{X}\otimes\rho_{AB}^{x},
\end{equation}
so that%
\begin{align}
H(A|B)_{\overline{\rho}}-\sum_{x}p_{X}(x)H(A|B)_{\rho^{x}} &  =I\left(
A;X|B\right)  _{\theta}\\
&  \geq2\left(  1-\sqrt{F}\left(  \theta_{XAB},\theta_{AB}^{1/2}\theta
_{B}^{-1/2}\theta_{XB}\theta_{B}^{-1/2}\theta_{AB}^{1/2}\right)  \right)  \\
&  =2\left(  1-\sum_{x}p_{X}(x)\sqrt{F}\left(  \rho_{AB}^{x},\overline{\rho
}_{AB}^{1/2}\overline{\rho}_{B}^{-1/2}\rho_{B}^{x}\overline{\rho}_{B}%
^{-1/2}\overline{\rho}_{AB}^{1/2}\right)  \right) . \
\end{align}

The last remaining implication 2 $\Rightarrow$ 3 has the most involved proof,
which we establish now by using the idea from \cite[Section~3-E]{LR73}.
Throughout our proof, we employ Theorem~V.3.3 of \cite{B97}. This theorem states that if
$f$ is a differentiable function on an open neighborhood of the spectrum of
some self-adjoint operator $A$, then its derivative $Df$ at $A$ is given by%
\begin{equation}
Df\left(  A\right)  :H\rightarrow\sum_{\lambda,\eta}f^{\left[  1\right]
}\left(  \lambda,\eta\right)  P_{A}\left(  \lambda\right)  HP_{A}\left(
\eta\right)  ,
\end{equation}
where $A=\sum_{\lambda}\lambda P_{A}\left(  \lambda\right)  $ is the spectral
decomposition of $A$, and $f^{\left[  1\right]  }$ is the first divided
difference function. In particular, if $x\longmapsto A\left(  x\right)
\in\mathcal{B}\left(  \mathcal{H}\right)  _{+}$ is a differentiable function
on an open interval in $\mathbb{R}$, with derivative $A^{\prime}$, then%
\begin{equation}
\frac{d}{dx}f\left(  A\left(  x\right)  \right)  =\sum_{\lambda,\eta
}f^{\left[  1\right]  }\left(  \lambda,\eta\right)  P_{A\left(  x\right)
}\left(  \lambda\right)  A^{\prime}\left(  x\right)  P_{A\left(  x\right)
}\left(  \eta\right)  ,\label{eq:op-deriv}%
\end{equation}
so that%
\begin{equation}
\frac{d}{dx}\text{Tr}\left\{  f\left(  A\left(  x\right)  \right)  \right\}
=\text{Tr}\left\{  f^{\prime}\left(  A\left(  x\right)  \right)  A^{\prime
}\left(  x\right)  \right\}  .
\end{equation}
In particular, if $A\left(  x\right)  =A+xB$, then%
\begin{equation}
\frac{d}{dx}\text{Tr}\left\{  f\left(  A\left(  x\right)  \right)  \right\}
=\text{Tr}\left\{  f^{\prime}\left(  A\left(  x\right)  \right)  B\right\}
.\label{eq:critical-one}%
\end{equation}
We can now proceed. In what follows, we will be taking $A\left(  x\right)
=\sigma_{AB}+x\rho_{AB}$, where $\sigma_{AB}$ is a positive definite density
operator, $\rho_{AB}$ is a density operator, and $x\geq0$. We also make use of
the standard fact that the function $f:X\rightarrow X^{-1}$ is everywhere
differentiable on the set of invertible density operators, and at an
invertible $X$, its derivative is $f^{\prime}\left(  X\right)  :Y\rightarrow
-X^{-1}YX^{-1}$.

Consider that the conditional entropy is homogeneous, in the sense that%
\begin{equation}
H\left(  A|B\right)  _{xG}=xH\left(  A|B\right)  _{G},
\end{equation}
where $x$ is a positive scalar and $G_{AB}$ is a positive semi-definite
operator on systems $AB$. Let%
\begin{equation}
\xi_{YAB}\equiv\frac{1}{x+1}\left\vert 0\right\rangle \left\langle
0\right\vert _{Y}\otimes\sigma_{AB}+\frac{x}{x+1}\left\vert 1\right\rangle
\left\langle 1\right\vert _{Y}\otimes\rho_{AB},
\end{equation}
with $\sigma_{AB}$ a positive definite density operator and $\rho_{AB}$ a
density operator. Then it follows from homogeneity and concavity with the Petz
remainder term (by assumption) that%
\begin{align}
H\left(  A|B\right)  _{\sigma+x\rho} &  =\left(  x+1\right)  H\left(
A|B\right)  _{\xi}\\
&  \geq\left(  x+1\right)  \left[  \frac{1}{x+1}H\left(  A|B\right)  _{\sigma
}+\frac{x}{x+1}H\left(  A|B\right)  _{\rho}+R\left(  x,\sigma_{AB},\rho
_{AB}\right)  \right]  \\
&  =H\left(  A|B\right)  _{\sigma}+xH\left(  A|B\right)  _{\rho}+\left(
x+1\right)  R\left(  x,\sigma_{AB},\rho_{AB}\right)  ,
\end{align}
where%
\begin{multline}
R\left(  x,\sigma_{AB},\rho_{AB}\right)  \equiv\\
2\left(  1-\left[  \frac{1}{x+1}\sqrt{F}\left(  \sigma_{AB},\xi_{AB}^{1/2}%
\xi_{B}^{-1/2}\sigma_{B}\xi_{B}^{-1/2}\xi_{AB}^{1/2}\right)  +\frac{x}%
{x+1}\sqrt{F}\left(  \rho_{AB},\xi_{AB}^{1/2}\xi_{B}^{-1/2}\rho_{B}\xi
_{B}^{-1/2}\xi_{AB}^{1/2}\right)  \right]  \right)  .
\end{multline}
Manipulating the above inequality then gives%
\begin{equation}
\frac{H\left(  A|B\right)  _{\sigma+x\rho}-H\left(  A|B\right)  _{\sigma}}%
{x}\geq H\left(  A|B\right)  _{\rho}+\frac{x+1}{x}R\left(  x,\sigma_{AB}%
,\rho_{AB}\right)  .
\end{equation}
Taking the limit as $x\searrow0$ then gives%
\begin{equation}
\lim_{x\searrow0}\frac{H\left(  A|B\right)  _{\sigma+x\rho}-H\left(
A|B\right)  _{\sigma}}{x}=\left.  \frac{d}{dx}H\left(  A|B\right)
_{\sigma+x\rho}\right\vert _{x=0}\geq H\left(  A|B\right)  _{\rho}%
+\lim_{x\searrow0}\frac{x+1}{x}R\left(  x,\sigma_{AB},\rho_{AB}\right)
.\label{eq:lim}%
\end{equation}
We now evaluate the limits separately, beginning with the one on the left hand
side. So we consider
\begin{equation}
\frac{d}{dx}H\left(  A|B\right)  _{\sigma+x\rho}=\frac{d}{dx}\left[
-\operatorname{Tr}\left\{  \left(  \sigma_{AB}+x\rho_{AB}\right)  \log\left(
\sigma_{AB}+x\rho_{AB}\right)  \right\}  +\operatorname{Tr}\left\{  \left(
\sigma_{B}+x\rho_{B}\right)  \log\left(  \sigma_{B}+x\rho_{B}\right)
\right\}  \right]  .
\end{equation}
We evaluate this by using $\frac{d}{dy}\left[  g\left(  y\right)  \log
g\left(  y\right)  \right]  =\left[  \log g\left(  y\right)  +1\right]
g^{\prime}\left(  y\right)  $ and (\ref{eq:critical-one}) to find that%
\begin{equation}
\frac{d}{dx}\operatorname{Tr}\left\{  \left(  \sigma_{AB}+x\rho_{AB}\right)
\log\left(  \sigma_{AB}+x\rho_{AB}\right)  \right\}  =\operatorname{Tr}%
\left\{  \left[  \log\left(  \sigma_{AB}+x\rho_{AB}\right)  +I_{AB}\right]
\rho_{AB}\right\}  ,
\end{equation}
so that%
\begin{equation}
\frac{d}{dx}H\left(  A|B\right)  _{\sigma+x\rho}=-\operatorname{Tr}\left\{
\rho_{AB}\log\left(  \sigma_{AB}+x\rho_{AB}\right)  \right\}
+\operatorname{Tr}\left\{  \rho_{B}\log\left(  \sigma_{B}+x\rho_{B}\right)
\right\}  ,
\end{equation}
and thus%
\begin{equation}
\left.  \frac{d}{dx}H\left(  A|B\right)  _{\sigma+x\rho}\right\vert
_{x=0}=-\operatorname{Tr}\left\{  \rho_{AB}\log\sigma_{AB}\right\}
+\operatorname{Tr}\left\{  \rho_{B}\log\sigma_{B}\right\}  .
\end{equation}
Substituting back into the inequality~\eqref{eq:lim}, we find that%
\begin{multline}
-\operatorname{Tr}\left\{  \rho_{AB}\log\sigma_{AB}\right\}
+\operatorname{Tr}\left\{  \rho_{B}\log\sigma_{B}\right\}  \geq\\
-\operatorname{Tr}\left\{  \rho_{AB}\log\rho_{AB}\right\}  +\operatorname{Tr}%
\left\{  \rho_{B}\log\rho_{B}\right\}  +\lim_{x\searrow0}\frac{x+1}{x}R\left(
x,\sigma_{AB},\rho_{AB}\right)  ,
\end{multline}
which is equivalent to (cf., \cite[Eq.~(3.2)]{LR73})%
\begin{equation}
D\left(  \rho_{AB}\Vert\sigma_{AB}\right)  -D\left(  \rho_{B}\Vert\sigma
_{B}\right)  \geq\lim_{x\searrow0}\frac{x+1}{x}R\left(  x,\sigma_{AB}%
,\rho_{AB}\right)  .
\end{equation}
So we need to evaluate this last limit to get the remainder term. Consider
that%
\begin{align}
&  \lim_{x\searrow0}\frac{x+1}{x}R\left(  x,\sigma_{AB},\rho_{AB}\right)  \\
&  =\lim_{x\searrow0}2\left(  1+\frac{1-\sqrt{F}\left(  \sigma_{AB},\xi
_{AB}^{1/2}\xi_{B}^{-1/2}\sigma_{B}\xi_{B}^{-1/2}\xi_{AB}^{1/2}\right)  }%
{x}-\sqrt{F}\left(  \rho_{AB},\xi_{AB}^{1/2}\xi_{B}^{-1/2}\rho_{B}\xi
_{B}^{-1/2}\xi_{AB}^{1/2}\right)  \right)  .
\end{align}
Since%
\begin{equation}
\lim_{x\searrow0}\sqrt{F}\left(  \rho_{AB},\xi_{AB}^{1/2}\xi_{B}^{-1/2}%
\rho_{B}\xi_{B}^{-1/2}\xi_{AB}^{1/2}\right)  =\sqrt{F}\left(  \rho_{AB}%
,\sigma_{AB}^{1/2}\sigma_{B}^{-1/2}\rho_{B}\sigma_{B}^{-1/2}\sigma_{AB}%
^{1/2}\right)  ,
\end{equation}
it remains to show that%
\begin{multline}
\lim_{x\searrow0}\frac{1-\sqrt{F}\left(  \sigma_{AB},\xi_{AB}^{1/2}\xi
_{B}^{-1/2}\sigma_{B}\xi_{B}^{-1/2}\xi_{AB}^{1/2}\right)  }{x}\\
=\left.  \frac{d}{dx}\sqrt{F}\left(  \sigma_{AB},\xi_{AB}^{1/2}\xi_{B}%
^{-1/2}\sigma_{B}\xi_{B}^{-1/2}\xi_{AB}^{1/2}\right)  \right\vert _{x=0}=0.
\end{multline}
Essentially, this derivative vanishes because the fidelity
is one at $x=0$ and therefore maximal. In what follows, we explicitly show that the derivative above is equal to
zero. Consider that%
\begin{align}
&  \sqrt{F}\left(  \sigma_{AB},\xi_{AB}^{1/2}\xi_{B}^{-1/2}\sigma_{B}\xi
_{B}^{-1/2}\xi_{AB}^{1/2}\right)  \nonumber\\
&  =\operatorname{Tr}\left\{  \left(  \sigma_{AB}^{1/2}\xi_{AB}^{1/2}\xi
_{B}^{-1/2}\sigma_{B}\xi_{B}^{-1/2}\xi_{AB}^{1/2}\sigma_{AB}^{1/2}\right)
^{1/2}\right\}  \\
&  =\operatorname{Tr}\left\{  \left(  \sigma_{AB}^{1/2}\left(  \sigma
_{AB}+x\rho_{AB}\right)  ^{1/2}\left(  \sigma_{B}+x\rho_{B}\right)
^{-1/2}\sigma_{B}\left(  \sigma_{B}+x\rho_{B}\right)  ^{-1/2}\left(
\sigma_{AB}+x\rho_{AB}\right)  ^{1/2}\sigma_{AB}^{1/2}\right)  ^{1/2}\right\}
,\label{eq:fid-exp}%
\end{align}
as well as%
\begin{equation}
\frac{d}{dx}\operatorname{Tr}\left\{  \left(  G\left(  x\right)  \right)
^{1/2}\right\}  =\frac{1}{2}\operatorname{Tr}\left\{  G\left(  x\right)
^{-1/2}\frac{d}{dx}G\left(  x\right)  \right\}  ,
\end{equation}
which follows from (\ref{eq:critical-one}). Applying the above rule, we get
that $\frac{d}{dx}$ of (\ref{eq:fid-exp}) is equal to%
\begin{equation}
\operatorname{Tr}\left\{
\begin{array}
[c]{c}%
\left(  \sigma_{AB}^{1/2}\left(  \sigma_{AB}+x\rho_{AB}\right)  ^{1/2}\left(
\sigma_{B}+x\rho_{B}\right)  ^{-1/2}\sigma_{B}\left(  \sigma_{B}+x\rho
_{B}\right)  ^{-1/2}\left(  \sigma_{AB}+x\rho_{AB}\right)  ^{1/2}\sigma
_{AB}^{1/2}\right)  ^{-1/2}\times\\
\sigma_{AB}^{1/2}\frac{d}{dx}\left[  \left(  \sigma_{AB}+x\rho_{AB}\right)
^{1/2}\left(  \sigma_{B}+x\rho_{B}\right)  ^{-1/2}\sigma_{B}\left(  \sigma
_{B}+x\rho_{B}\right)  ^{-1/2}\left(  \sigma_{AB}+x\rho_{AB}\right)
^{1/2}\right]  \sigma_{AB}^{1/2}%
\end{array}
\right\}  .\label{eq:big-trace}%
\end{equation}
Now, take the limit as $x\searrow0$ to find that (\ref{eq:big-trace}) is equal
to
\begin{align}
&  \left.  \frac{d}{dx}\sqrt{F}\left(  \sigma_{AB},\xi_{AB}^{1/2}\xi
_{B}^{-1/2}\sigma_{B}\xi_{B}^{-1/2}\xi_{AB}^{1/2}\right)  \right\vert
_{x=0}\nonumber\\
&  =\operatorname{Tr}\left\{
\begin{array}
[c]{c}%
\left(  \sigma_{AB}^{1/2}\sigma_{AB}^{1/2}\sigma_{B}^{-1/2}\sigma_{B}%
\sigma_{B}^{-1/2}\sigma_{AB}^{1/2}\sigma_{AB}^{1/2}\right)  ^{-1/2}\times\\
\sigma_{AB}^{\frac{1}{2}}\left.  \frac{d}{dx}\left[  \left(  \sigma_{AB}%
+x\rho_{AB}\right)  ^{\frac{1}{2}}\left(  \sigma_{B}+x\rho_{B}\right)
^{-\frac{1}{2}}\sigma_{B}\left(  \sigma_{B}+x\rho_{B}\right)  ^{-\frac{1}{2}%
}\left(  \sigma_{AB}+x\rho_{AB}\right)  ^{\frac{1}{2}}\right]  \right\vert
_{x=0}\sigma_{AB}^{\frac{1}{2}}%
\end{array}
\right\}  \\
&  =\operatorname{Tr}\left\{
\begin{array}
[c]{c}%
\left(  \sigma_{AB}\right)  ^{-1}\times\\
\sigma_{AB}^{\frac{1}{2}}\left.  \frac{d}{dx}\left[  \left(  \sigma_{AB}%
+x\rho_{AB}\right)  ^{\frac{1}{2}}\left(  \sigma_{B}+x\rho_{B}\right)
^{-\frac{1}{2}}\sigma_{B}\left(  \sigma_{B}+x\rho_{B}\right)  ^{-\frac{1}{2}%
}\left(  \sigma_{AB}+x\rho_{AB}\right)  ^{\frac{1}{2}}\right]  \right\vert
_{x=0}\sigma_{AB}^{\frac{1}{2}}%
\end{array}
\right\}  \\
&  =\operatorname{Tr}\left\{  \left.  \frac{d}{dx}\left[  \left(  \sigma
_{AB}+x\rho_{AB}\right)  ^{1/2}\left(  \sigma_{B}+x\rho_{B}\right)
^{-1/2}\sigma_{B}\left(  \sigma_{B}+x\rho_{B}\right)  ^{-1/2}\left(
\sigma_{AB}+x\rho_{AB}\right)  ^{1/2}\right]  \right\vert _{x=0}\right\}
\end{align}
So we focus on this last expression and note from the derivative product rule
that there are four terms to consider. We consider one at a time, beginning
with the first term:%
\begin{align}
&  \lim_{x\searrow0}\operatorname{Tr}\left\{  \frac{d}{dx}\left[  \left(
\sigma_{AB}+x\rho_{AB}\right)  ^{1/2}\right]  \left(  \sigma_{B}+x\rho
_{B}\right)  ^{-1/2}\sigma_{B}\left(  \sigma_{B}+x\rho_{B}\right)
^{-1/2}\left(  \sigma_{AB}+x\rho_{AB}\right)  ^{1/2}\right\}  \nonumber\\
&  =\operatorname{Tr}\left\{  \left.  \frac{d}{dx}\left[  \left(  \sigma
_{AB}+x\rho_{AB}\right)  ^{1/2}\right]  \right\vert _{x=0}\sigma_{B}%
^{-1/2}\sigma_{B}\sigma_{B}^{-1/2}\sigma_{AB}^{1/2}\right\}  \\
&  =\operatorname{Tr}\left\{  \left.  \frac{d}{dx}\left[  \left(  \sigma
_{AB}+x\rho_{AB}\right)  ^{1/2}\right]  \right\vert _{x=0}\sigma_{AB}%
^{1/2}\right\}  \\
&  =\frac{1}{2}\operatorname{Tr}\left\{  \rho_{AB}\right\}  \\
&  =\frac{1}{2},
\end{align}
where the second to last line follows from (\ref{eq:op-deriv}). We now
consider the second term:%
\begin{align}
&  \lim_{x\searrow0}\operatorname{Tr}\left\{  \left(  \sigma_{AB}+x\rho
_{AB}\right)  ^{1/2}\frac{d}{dx}\left[  \left(  \sigma_{B}+x\rho_{B}\right)
^{-1/2}\right]  \sigma_{B}\left(  \sigma_{B}+x\rho_{B}\right)  ^{-1/2}\left(
\sigma_{AB}+x\rho_{AB}\right)  ^{1/2}\right\}  \nonumber\\
&  =\operatorname{Tr}\left\{  \sigma_{AB}^{1/2}\left.  \frac{d}{dx}\left[
\left(  \sigma_{B}+x\rho_{B}\right)  ^{-1/2}\right]  \right\vert _{x=0}%
\sigma_{B}\sigma_{B}^{-1/2}\sigma_{AB}^{1/2}\right\}  \\
&  =\operatorname{Tr}\left\{  \sigma_{AB}\left.  \frac{d}{dx}\left[  \left(
\sigma_{B}+x\rho_{B}\right)  ^{-1/2}\right]  \right\vert _{x=0}\sigma
_{B}\sigma_{B}^{-1/2}\right\}  \\
&  =\operatorname{Tr}\left\{  \sigma_{B}\left.  \frac{d}{dx}\left[  \left(
\sigma_{B}+x\rho_{B}\right)  ^{-1/2}\right]  \right\vert _{x=0}\sigma
_{B}\sigma_{B}^{-1/2}\right\}  \\
&  =\operatorname{Tr}\left\{  \left.  \frac{d}{dx}\left[  \left(  \sigma
_{B}+x\rho_{B}\right)  ^{-1/2}\right]  \right\vert _{x=0}\sigma_{B}%
^{3/2}\right\}  \\
&  =-\frac{1}{2}\operatorname{Tr}\left\{  \sigma_{B}^{-3/2}\sigma_{B}%
^{3/2}\rho_{B}\right\}  \\
&  =-\frac{1}{2}\operatorname{Tr}\left\{  \rho_{B}\right\}  \\
&  =-\frac{1}{2}.
\end{align}
The third to last line follows from (\ref{eq:op-deriv}). Combining these
results and using that the last two terms resulting from the product rule are
Hermitian conjugates of the first two, we find that
\begin{equation}
\operatorname{Tr}\left\{  \left.  \frac{d}{dx}\left[  \left(  \sigma
_{AB}+x\rho_{AB}\right)  ^{1/2}\left(  \sigma_{B}+x\rho_{B}\right)
^{-1/2}\sigma_{B}\left(  \sigma_{B}+x\rho_{B}\right)  ^{-1/2}\left(
\sigma_{AB}+x\rho_{AB}\right)  ^{1/2}\right]  \right\vert _{x=0}\right\}  =0,
\end{equation}
which allows us to conclude that%
\begin{equation}
\left.  \frac{d}{dx}\sqrt{F}\left(  \sigma_{AB},\xi_{AB}^{1/2}\xi_{B}%
^{-1/2}\sigma_{B}\xi_{B}^{-1/2}\xi_{AB}^{1/2}\right)  \right\vert _{x=0}=0,
\end{equation}
Hence, we can conclude that the following inequality is a consequence of
(\ref{eq:concavity-remainder-1}):%
\begin{equation}
D\left(  \rho_{AB}\Vert\sigma_{AB}\right)  -D\left(  \rho_{B}\Vert\sigma
_{B}\right)  \geq2\left(  1-\sqrt{F}\left(  \rho_{AB},\sigma_{AB}^{1/2}%
\sigma_{B}^{-1/2}\rho_{B}\sigma_{B}^{-1/2}\sigma_{AB}^{1/2}\right)  \right)  .
\end{equation}

\end{proof}

\bigskip

\textbf{Note:} After the completion of the present paper,
the works in \cite{W15} and \cite{STH15} appeared, which build upon ideas established in this paper. The main contribution of \cite{W15} is to show that the rotated Petz map in Corollary~\ref{cor:mono} can take a more particular form. Specifically, the unitary channel $\mathcal{U}_B$ in Corollary~\ref{cor:mono} can be taken to commute with $\mathcal{N}(\sigma)$ and the unitary channel $\mathcal{V}_S$ can be taken to commute with $\sigma$. The main contribution of \cite{STH15} is to show that the fidelity remainder term in Corollary~\ref{cor:mono} can be replaced with the ``measured relative entropy'' and the rotated Petz map can be replaced with a ``twirled Petz map.'' Please refer to \cite{W15} and \cite{STH15} for more details. 

\bigskip

\textbf{Acknowledgements.} We are especially grateful to Rupert Frank for many
discussions on the topic of this paper. We thank the anonymous referees for many suggestions that helped to improve the paper. We acknowledge additional discussions
with Siddhartha Das, Nilanjana Datta, Omar Fawzi, Renato Renner, Volkher
Scholz, Kaushik P.~Seshadreesan, Marco Tomamichel, and Michael Walter. MMW
acknowledges support from startup funds from the Department of Physics and
Astronomy at LSU, the NSF\ under Award No.~CCF-1350397, and the DARPA Quiness
Program through US Army Research Office award W31P4Q-12-1-0019.

\appendix

\section{Auxiliary lemmas from \cite{FR14}}

\label{app:FR-lemmas}

In this appendix, for the convenience of the reader, we  list verbatim the relevant lemmas that we have used from \cite{FR14}.

  \begin{lemma}[Lemma 2.3 of \cite{FR14}] \label{lem_Dmapping}
    Let $\rho$ be a density operator, let $\sigma$ be a non-negative operator on the same space, and let $\{\mathcal{W}_n\}_{n \in \mathbb{N}}$ be a sequence of trace non-increasing completely positive maps on the $n$-fold tensor product of this space. If $\tr(\mathcal{W}_n(\rho^{\otimes n}))$  decreases less than exponentially in $n$, i.e., 
    \begin{align}
      \liminf_{n \to \infty} e^{\xi n} \tr\bigl(\mathcal{W}_n(\rho^{\otimes n})\bigr) > 0
    \end{align}
    for any $\xi > 0$, then
    \begin{align}
      \limsup_{n \to \infty} \frac{1}{n} D\bigl(\mathcal{W}_n(\rho^{\otimes n}) \| \mathcal{W}_n(\sigma^{\otimes n})\bigr) \leq D(\rho \| \sigma) \ .
    \end{align}
  \end{lemma}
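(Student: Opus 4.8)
The plan is to remove the one defect of $\mathcal{W}_n$ — that it is only trace non-increasing, not trace preserving — by dilating it to a genuine channel, and then to apply the \emph{ordinary} monotonicity (data-processing) inequality for relative entropy together with its additivity on tensor powers. Concretely, I would set $L_n \equiv I - \mathcal{W}_n^{\dagger}(I)$, which satisfies $L_n \geq 0$ precisely because $\mathcal{W}_n$ is trace non-increasing, and define $\widetilde{\mathcal{W}}_n(X) \equiv \mathcal{W}_n(X) \oplus \tr(L_n X)\,\proj{e}$ by appending a one-dimensional ``leak'' flag $\proj{e}$ orthogonal to the range of $\mathcal{W}_n$. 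Then $\widetilde{\mathcal{W}}_n$ is completely positive and trace preserving, so $D\bigl(\widetilde{\mathcal{W}}_n(\rho^{\otimes n}) \,\|\, \widetilde{\mathcal{W}}_n(\sigma^{\otimes n})\bigr) \leq D(\rho^{\otimes n} \,\|\, \sigma^{\otimes n})$ holds by the standard result.

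Next I would exploit additivity of relative entropy over orthogonal direct sums. Writing $a_n \equiv \tr(L_n \rho^{\otimes n}) = 1 - \tr(\mathcal{W}_n(\rho^{\otimes n}))$ and $b_n \equiv \tr(L_n \sigma^{\otimes n})$, the flag block contributes $a_n \log(a_n/b_n)$, so $D\bigl(\widetilde{\mathcal{W}}_n(\rho^{\otimes n}) \,\|\, \widetilde{\mathcal{W}}_n(\sigma^{\otimes n})\bigr) = D\bigl(\mathcal{W}_n(\rho^{\otimes n}) \,\|\, \mathcal{W}_n(\sigma^{\otimes n})\bigr) + a_n\log(a_n/b_n)$. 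Combining this with the data-processing bound above and with $D(\rho^{\otimes n} \,\|\, \sigma^{\otimes n}) = n\,D(\rho\|\sigma)$ yields $D\bigl(\mathcal{W}_n(\rho^{\otimes n}) \,\|\, \mathcal{W}_n(\sigma^{\otimes n})\bigr) \leq n\,D(\rho\|\sigma) - a_n\log a_n + a_n\log b_n$. Dividing by $n$ and taking $\limsup$, the contribution $-a_n\log a_n$ is bounded (by a constant in $[0, \log e / e]$) and therefore drops out of the rate.

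The hard part will be controlling the residual leak term $a_n\log b_n$, which is the exact price of $\mathcal{W}_n$ being merely trace non-increasing. Since $0 \leq b_n \leq \tr(\sigma^{\otimes n})$, this term is harmless when $\sigma$ is a density operator ($b_n \leq 1$, hence $a_n\log b_n \leq 0$), and in that case the claim follows at once. In general it is the sub-exponential decay of $\tr(\mathcal{W}_n(\rho^{\otimes n}))$ that one must use to ensure $\tfrac1n a_n\log b_n \to 0$, i.e.\ that the subnormalization correction survives only at sublinear order; I would verify carefully that the stated hypothesis suffices here (in the paper's applications one in fact has $a_n = 1 - \tr(\mathcal{W}_n(\rho^{\otimes n})) \to 0$, which makes this immediate). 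I would also confirm the support condition $a_n > 0 \Rightarrow b_n > 0$, inherited from $\operatorname{supp}(\rho)\subseteq\operatorname{supp}(\sigma)$ propagating through $\mathcal{W}_n$, so that the dilated relative entropy is finite. Assembling these pieces gives $\limsup_n \tfrac1n D\bigl(\mathcal{W}_n(\rho^{\otimes n}) \,\|\, \mathcal{W}_n(\sigma^{\otimes n})\bigr) \leq D(\rho\|\sigma)$.

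An alternative derivation, closer to the relative typical subspace methods used elsewhere in this paper, would read the statement off the quantum Stein's lemma: sub-exponential decay of $\tr(\mathcal{W}_n(\rho^{\otimes n}))$ is exactly the acceptance-probability condition in the converse part of Stein's lemma, which caps the attainable relative-entropy rate at $D(\rho\|\sigma)$. I expect the dilation argument to be the most transparent to write out, with this Stein's-lemma viewpoint serving as a conceptual check on the precise role played by the trace hypothesis.
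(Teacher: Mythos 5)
First, a structural point: this paper never proves Lemma~\ref{lem_Dmapping} at all --- Appendix~\ref{app:FR-lemmas} reproduces it verbatim from \cite{FR14} precisely because the proof is outsourced there. So the only proof to compare against is the one in \cite{FR14}, and your proposal is not it; more importantly, your main route cannot be completed. The defect is the one you flag but misdiagnose: you hope the subexponential-decay hypothesis forces the leak term $\frac{1}{n}a_n \log b_n$ to vanish, but it does not, because that hypothesis does \emph{not} imply $a_n = 1-\tr(\mathcal{W}_n(\rho^{\otimes n})) \to 0$. The constant sequence $\tr(\mathcal{W}_n(\rho^{\otimes n})) = \tfrac12$ satisfies $\liminf_n e^{\xi n}\tr(\mathcal{W}_n(\rho^{\otimes n})) = \infty$ for every $\xi > 0$, while $b_n$ can grow like $(\tr\sigma)^n$ since $\sigma$ is only non-negative; then $\frac{1}{n}a_n\log b_n \to (\limsup_n a_n)\log\tr\sigma > 0$ at linear order. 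Your ``harmless'' case $\tr\sigma \le 1$ is exactly the case this paper never uses: Theorem~\ref{thm:mono-PT} applies the lemma with $\sigma = I_A \otimes \rho_B$, of trace $\dim\mathcal{H}_A > 1$. Worse, under the unnormalized extension $D(\rho\Vert\sigma) = \tr[\rho(\log\rho - \log\sigma)]$ that your direct-sum identity tacitly assumes, the lemma is outright \emph{false}, so no repair exists along these lines: take $\mathcal{W}_n = \tfrac12\,\id$, $\rho = I_2/2$, $\sigma = 2I_2$; then $\frac1n D(\mathcal{W}_n(\rho^{\otimes n})\Vert\mathcal{W}_n(\sigma^{\otimes n})) = -\log 2$, whereas $D(\rho\Vert\sigma) = -2\log 2$. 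Consistency of the companion Lemma~\ref{lem_DFidelity} (which fails for unnormalized $D$: try $\rho = \tfrac12\proj{\psi}$, $\sigma = \tfrac18\proj{\psi}$, giving $1 \ge 2$, but holds with equality after normalizing) shows that \cite{FR14} defines $D(\rho\Vert\sigma) = \frac{1}{\tr\rho}\tr[\rho(\log\rho-\log\sigma)]$ for subnormalized $\rho$. With that convention your flag decomposition becomes $D(\widetilde{\mathcal{W}}_n(\rho^{\otimes n})\Vert\widetilde{\mathcal{W}}_n(\sigma^{\otimes n})) = p_n D(\mathcal{W}_n(\rho^{\otimes n})\Vert\mathcal{W}_n(\sigma^{\otimes n})) + a_n\log(a_n/b_n)$ with $p_n = \tr(\mathcal{W}_n(\rho^{\otimes n}))$, so data processing only yields a bound carrying a multiplicative prefactor $1/p_n$ --- and the hypothesis controls $\frac1n\log(1/p_n) \to 0$, not $1/p_n$ itself (e.g.\ $p_n = 1/n$ is allowed; a classical $\mathcal{W}_n$ that projects onto a single string and rescales by $1/n$ makes your bound too weak by a factor of $n$ even though the lemma then holds with equality).

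The correct mechanism is the one you relegate to a closing ``conceptual check'': the lemma is a converse-Stein-type statement --- restricting the i.i.d.\ pair $(\rho^{\otimes n}, \sigma^{\otimes n})$ by an operation whose acceptance probability decays subexponentially cannot raise the relative-entropy rate above $D(\rho\Vert\sigma)$ --- and the subexponential hypothesis must enter through strong-converse/typicality machinery of the Hiai--Petz and Ogawa--Nagaoka kind (or spectral pinching), not through the crude bound $b_n \le (\tr\sigma)^n$. Even on that route real work remains that your sketch does not supply: one must bridge from the relative entropy of the subnormalized outputs $\mathcal{W}_n(\rho^{\otimes n})$, $\mathcal{W}_n(\sigma^{\otimes n})$ (which are not i.i.d.) to hypothesis-testing quantities on the inputs via $\mathcal{W}_n^{\dag}$, which is where trace non-increasingness is actually used. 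Finally, your support-condition worry is vacuous as stated: the lemma does not assume $\operatorname{supp}(\rho) \subseteq \operatorname{supp}(\sigma)$, and if that fails then $D(\rho\Vert\sigma) = +\infty$ and there is nothing to prove.
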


  \begin{lemma}[Lemma 4.2 of \cite{FR14}]  \label{lem_subfidelity}
  Let $\rho_{R^n S^n}$ be a permutation-invariant non-negative operator on $(R \otimes S)^{\otimes n}$ and let $\sigma_{R S}$ be a non-negative operator on $R \otimes S$. Furthermore, let $W_{R^n}$ be a permutation-invariant operator on $R^{\otimes n}$ with $\|W_{R^n}\|_\infty \leq 1$. Then there exists a unitary $U_R$  on $R$ such that  \begin{align}  \label{eq_subfidelity}
    \sqrt{F}\bigl(\rho_{R^n S^n}, U_R^{\otimes n}  \sigma_{R S}^{\otimes n} (U_R^{\otimes n})^{\dagger}\bigr) \geq (n+1)^{-d^2} \sqrt{F}\bigl(W_{R^n} \rho_{R^n S^n} W_{R^n}^{\dagger},  \sigma_{RS}^{\otimes n} \bigr) \ ,
  \end{align}
  where $d =  \dim(R) \dim(S)^2$.
\end{lemma}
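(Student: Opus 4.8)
The plan is to prove this by the post-selection (de Finetti reduction) technique, whose purpose here is to trade the general permutation-invariant contraction $W_{R^{n}}$ against a single tensor-power unitary rotation $U_{R}^{\otimes n}$ of $\sigma$, at the cost of only the stated polynomial factor $(n+1)^{d^{2}}$. First I would use Uhlmann's theorem to rewrite both root fidelities as maximal overlaps of purifying vectors. Since the unitary we seek acts on $R$ alone and is meant to rotate $\sigma$, I would leave the $R$-systems untouched and purify only the $S$-systems, adjoining a reference $\tilde{S}$ with $\dim\tilde{S}=\dim S$; the local purifying block $S\tilde{S}$ then has dimension $\dim(S)^{2}$, while $R$ of dimension $\dim(R)$ is retained to carry the rotation, so the relevant single-copy dimension is exactly $d=\dim(R)\dim(S)^{2}$. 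Concretely I would take $|\sigma\rangle^{\otimes n}$ as the purification of $\sigma^{\otimes n}$, where $|\sigma\rangle$ purifies $\sigma_{RS}$, together with a permutation-invariant (symmetric) purification $|\rho\rangle$ of $\rho_{R^{n}S^{n}}$, so that $(W_{R^{n}}\otimes I)|\rho\rangle$ purifies $W_{R^{n}}\rho W_{R^{n}}^{\dagger}$. After this reduction, $\sqrt{F}(W_{R^{n}}\rho W_{R^{n}}^{\dagger},\sigma^{\otimes n})$ becomes the overlap $|\langle\rho|(W_{R^{n}}^{\dagger}\otimes I)|\sigma\rangle^{\otimes n}|$, and the target quantity is the analogous overlap with $W_{R^{n}}^{\dagger}$ replaced by a tensor-power unitary $U_{R}^{\otimes n}$.

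So the whole problem collapses to a single comparison: on the symmetric vector $|\sigma\rangle^{\otimes n}$, the action of the arbitrary permutation-invariant contraction $W_{R^{n}}^{\dagger}$ must be matched, up to a $\operatorname{poly}(n)$ factor, by that of some tensor-power unitary $U_{R}^{\otimes n}$. Here I would invoke Schur--Weyl duality on $R^{\otimes n}=\bigoplus_{\lambda}\mathcal{U}_{\lambda}\otimes\mathcal{S}_{\lambda}$: a permutation-invariant operator is block-diagonal, $W_{R^{n}}=\bigoplus_{\lambda}W_{\lambda}\otimes I_{\mathcal{S}_{\lambda}}$ with each $\|W_{\lambda}\|_{\infty}\le 1$, while a tensor-power unitary acts within the same blocks as $\bigoplus_{\lambda}\pi_{\lambda}(U_{R})\otimes I_{\mathcal{S}_{\lambda}}$. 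The number of blocks $\lambda$ (Young diagrams with at most $\dim R$ rows and $n$ boxes) is at most $(n+1)^{\dim R}$, polynomial in $n$. The core estimate, supplied by the Christandl--K\"onig--Renner de Finetti reduction on the $d$-dimensional local space, is that a permutation-invariant operator is dominated in the positive-semidefinite order by $(n+1)^{d^{2}-1}$ times a de Finetti mixture $\int (U^{\otimes n})(\cdot)(U^{\otimes n})^{\dagger}\,dU$ of tensor powers; the exponent $d^{2}-1\le d^{2}$ is precisely the number of parameters of a $d$-dimensional density operator, which is the source of the factor $(n+1)^{d^{2}}$.

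With this domination in hand I would pass from the de Finetti integral to a single, best term. Because the comparison is between nonnegative operators and root fidelity is monotone and jointly concave, it is the operator inequality, rather than a naive averaging, that licenses extracting the maximizing tensor-power rotation: there exists $U_{R}$ for which the overlap is at least $(n+1)^{-d^{2}}$ times $\sqrt{F}(W_{R^{n}}\rho W_{R^{n}}^{\dagger},\sigma^{\otimes n})$. The contraction hypothesis $\|W_{R^{n}}\|_{\infty}\le 1$ enters exactly to guarantee that conjugation by $W$ does not enlarge the operator, so that the domination, and hence the extraction, survive. Collecting the polynomial factor then yields the claimed inequality with the single-copy unitary $U_{R}$.

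The main obstacle is exactly this transfer step: $W_{R^{n}}$ is an arbitrary permutation-invariant contraction, not a unitary and not of tensor-power form, so matching its effect on $|\sigma\rangle^{\otimes n}$ by the much more rigid family $\{U_{R}^{\otimes n}\}$ is where all the work lies. The naive attempt of averaging over the Haar measure fails, since $\sqrt{F}(\sigma^{\otimes n},(U_{R}\sigma U_{R}^{\dagger})^{\otimes n})$ decays exponentially in $n$ for generic $U_{R}$, so a polynomial prefactor could never compensate an average; the argument must instead use the positive-semidefinite domination to \emph{select} a good $U_{R}$, and must carefully track that the purifications are chosen permutation-covariantly so that the symmetric-subspace dimension bound $(n+1)^{d^{2}}$ genuinely applies. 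Verifying that the de Finetti reduction can be run on the purified, $S$-doubled space of dimension $d=\dim(R)\dim(S)^{2}$, and that the optimization over purifications does not reintroduce an $n$-dependence outside the polynomial factor, is the delicate bookkeeping I expect to occupy the bulk of a complete proof.
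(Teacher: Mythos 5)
Before anything else: the paper you are working from does not actually prove this statement. Appendix~\ref{app:FR-lemmas} lists it \emph{verbatim} from \cite{FR14} and the paper uses it as a black box, so the relevant benchmark is Fawzi and Renner's original proof. Your sketch assembles plausible raw materials (Uhlmann's theorem, the Schur--Weyl block decomposition $W_{R^n}=\bigoplus_\lambda W_\lambda\otimes I_{\mathcal{S}_\lambda}$ with $\|W_\lambda\|_\infty\le 1$, polynomial counting of Young diagrams, and the correct diagnosis that naive Haar averaging fails), but it has two genuine gaps at exactly the decisive points. The first is the purification bookkeeping. A general non-negative $\sigma_{RS}$ has rank up to $\dim(R)\dim(S)$, so it admits no purification $|\sigma\rangle\in R\otimes S\otimes\tilde{S}$ with $\dim\tilde{S}=\dim S$; likewise a general permutation-invariant $\rho_{R^nS^n}$ has rank up to $(\dim R\,\dim S)^n$ and cannot be purified inside $(R\otimes S\otimes\tilde{S})^{\otimes n}$, symmetric or not. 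Hence your derivation of the single-copy dimension $d=\dim(R)\dim(S)^2$ --- ``purify only the $S$-systems'' --- does not go through as written, and repairing it with a legitimate purifying system of per-site dimension $\dim(R)\dim(S)$ changes the exponent; recovering exactly $(n+1)^{-d^2}$ would then need an argument you have not given.

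The second and more serious gap is the transfer step, which you assert rather than prove. The Christandl--K\"onig--Renner post-selection inequality dominates a permutation-invariant state by $(n+1)^{d^2-1}$ times the de Finetti state $\int\omega^{\otimes n}\,\mu(d\omega)$, a mixture over \emph{all} i.i.d.\ states $\omega$ --- not over the unitary orbit $\{(U_R\,\sigma\,U_R^\dagger)^{\otimes n}\}$ of the \emph{given} $\sigma$. The conclusion of the lemma requires a rotated copy of that specific $\sigma$, and nothing in the domination forces the dominating mixture to consist of such rotations; this is precisely where $W_{R^n}$ being an arbitrary contraction, rather than of tensor-power form, bites. Moreover, even granting a mixture supported on the orbit, your extraction of a single good $U_R$ ``at polynomial cost'' is not licensed by the tools you invoke: joint concavity of $\sqrt{F}$ gives $\sqrt{F}\bigl(\rho,\int \tau_U\,dU\bigr)\ge\int\sqrt{F}(\rho,\tau_U)\,dU$, an inequality pointing the wrong way, since the fidelity with an average can strictly exceed the best single-term fidelity (take $\rho$ maximally mixed and $\tau_U$ a Haar orbit of pure product states: the fidelity with the average is $1$ while each term's fidelity is exponentially small in the dimension). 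Bounding that excess by $\mathrm{poly}(n)$ in the tensor-power setting is itself a nontrivial claim that would have to exploit the Schur--Weyl block structure you mention but never actually deploy. You candidly flag this transfer step as ``where all the work lies''; that is accurate --- it is essentially the entire content of the lemma, and the sketch does not supply it.
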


 \begin{lemma}[Lemma B.2 of \cite{FR14}]  \label{lem_DFidelity}
   For any non-negative operators $\rho$ and $\sigma$
   \begin{align}
      D(\rho \| \sigma)  \geq -2 \log_2 \frac{\sqrt{F}(\rho , \sigma)}{\tr(\rho)}  \ .
   \end{align}
 \end{lemma}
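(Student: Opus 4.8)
The plan is to reduce this quantum inequality to an elementary classical one by passing through a suitably chosen measurement, and to use only the ordinary (unrefined) monotonicity of relative entropy from \eqref{eq:mono-rel-ent}, which is already available to us. First I would fix a POVM $\{M_x\}$ on the underlying space and form the induced non-negative measures $p_x\equiv\tr\{M_x\rho\}$ and $q_x\equiv\tr\{M_x\sigma\}$. The associated measurement map $(\cdot)\mapsto\sum_x\tr\{M_x(\cdot)\}\proj{x}$ is CPTP, so applying \eqref{eq:mono-rel-ent} gives $D(\rho\Vert\sigma)\geq D(p\Vert q)$, where $D(p\Vert q)=\sum_x p_x\log_2(p_x/q_x)$ is the classical relative entropy. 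Since only the unrefined inequality is invoked here, this step is legitimate and not circular.

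The classical half is a one-line application of Jensen's inequality. Writing $\|p\|_1=\sum_x p_x=\tr\rho$ and $\tilde p_x=p_x/\|p\|_1$, convexity of $t\mapsto-\log_2 t$ applied to $\sqrt{q_x/\tilde p_x}$ under the distribution $\tilde p$ yields $D(\tilde p\Vert q)\geq-2\log_2\sum_x\sqrt{\tilde p_x\,q_x}$. Using the homogeneity identity $D(p\Vert q)=\|p\|_1 D(\tilde p\Vert q)+\|p\|_1\log_2\|p\|_1$ and rescaling then produces the sub-normalized classical bound
\[
D(p\Vert q)\ \geq\ -2\,\|p\|_1\,\log_2\frac{\sum_x\sqrt{p_x q_x}}{\|p\|_1}.
\]
To convert the classical overlap $\sum_x\sqrt{p_x q_x}$ into the quantum root fidelity, I would take $\{M_x\}$ to be the measurement that is optimal in the Fuchs--Caves variational formula $\sqrt{F}(\rho,\sigma)=\|\sqrt{\rho}\sqrt{\sigma}\|_1=\min_{\{M_x\}}\sum_x\sqrt{\tr\{M_x\rho\}\,\tr\{M_x\sigma\}}$, at which $\sum_x\sqrt{p_x q_x}=\sqrt{F}(\rho,\sigma)$ exactly.

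Chaining the two halves at this minimizer yields the bound, with $\|p\|_1=\tr\rho$ supplying precisely the $\tr(\rho)$ that appears inside the logarithm on the right-hand side; for a genuine density operator ($\tr\rho=1$, the case used in the applications) the weights collapse to the familiar $D(\rho\Vert\sigma)\geq-\log_2 F(\rho,\sigma)$. The step I expect to require the most care is the accounting of the sub-normalization $\tr\rho\neq1$: it is exactly what creates the division by $\tr\rho$, so one must keep the homogeneity factors straight throughout. As an equivalent route I would keep in reserve the observation that the right-hand side is the sandwiched R\'enyi relative entropy of order $1/2$, $\widetilde D_{1/2}(\rho\Vert\sigma)=-2\log_2\|\sqrt{\rho}\sqrt{\sigma}\|_1$, so that monotonicity of $\widetilde D_\alpha$ in the order together with $\lim_{\alpha\to1}\widetilde D_\alpha=D$ delivers the inequality in one stroke, the sub-normalization again being absorbed into the $\tr(\rho)$-normalized definition of the divergences.
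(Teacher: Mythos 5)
You should first know that the paper itself offers no proof of this statement: it is imported verbatim from \cite{FR14} and listed in the appendix purely for the reader's convenience, and in \cite{FR14} it is obtained from monotonicity in the R\'{e}nyi parameter of the R\'{e}nyi relative entropy evaluated at $\alpha=1/2$ --- essentially the route you keep ``in reserve'' in your last sentence. Your primary route --- measure, apply the unrefined data-processing inequality (fine for sub-normalized $\rho$ by homogeneity), use Jensen classically, and then pick the Fuchs--Caves \emph{minimizing} POVM so that $\sum_x\sqrt{p_xq_x}=\sqrt{F}(\rho,\sigma)$ exactly --- is a legitimately more elementary alternative, and you correctly insist on the optimal measurement, since for a generic POVM the overlap inequality would point the wrong way.

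However, your final chaining does not deliver the stated inequality, and the discrepancy sits precisely at the spot you flagged as delicate. Your (correct) classical bound is
\begin{equation*}
D(p\Vert q)\ \geq\ -2\,\Vert p\Vert_1\,\log_2\frac{\sum_x\sqrt{p_xq_x}}{\Vert p\Vert_1},
\end{equation*}
which at the optimal measurement yields $D(\rho\Vert\sigma)\geq-2\,\tr(\rho)\,\log_2\left(\sqrt{F}(\rho,\sigma)/\tr(\rho)\right)$ --- note the extra factor $\tr(\rho)$ multiplying the logarithm, absent from the lemma. When $\tr(\rho)<1$ and $\sqrt{F}<\tr(\rho)$, this is \emph{strictly weaker} than the claimed bound, so the chain as written does not close. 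The resolution is a definitional convention you never pin down: in \cite{FR14} the relative entropy of a non-negative operator is normalized, $D(\rho\Vert\sigma)\equiv\tr\{\rho(\log\rho-\log\sigma)\}/\tr(\rho)$. Under the unnormalized Umegaki reading the lemma as stated is actually \emph{false}: take commuting scalars $\rho=\tfrac12$, $\sigma=\tfrac14$, for which $\tr\{\rho(\log_2\rho-\log_2\sigma)\}=\tfrac12$ while $-2\log_2(\sqrt{F}/\tr\,\rho)=1$. With the normalized convention, dividing your displayed bound through by $\Vert p\Vert_1=\tr(\rho)$ (the measurement channel is trace preserving, so the normalizations on both sides of data processing agree) gives precisely the lemma, and the scalar example above becomes an equality, as it should, since the bound is tight. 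So your argument is complete up to this one step --- divide by $\tr(\rho)$ and identify the left-hand side with the normalized $D$ of \cite{FR14} --- which you half-acknowledge in the R\'{e}nyi fallback (``absorbed into the $\tr(\rho)$-normalized definition'') but must carry through the main chain as well.
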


\begin{lemma}[Lemma B.6 of \cite{FR14}]   \label{lem_fidelityoperator}
  For any non-negative operators $\rho$ and $\sigma$ and any operator $W$  on the same space we have
  \begin{align}
    \sqrt{F}(\rho, W \sigma W^{\dagger}) = \sqrt{F}(W^{\dagger} \rho W, \sigma) \ .
  \end{align}
\end{lemma}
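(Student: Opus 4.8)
The plan is to reduce the claimed identity to the elementary fact that an operator and its adjoint share the same singular values, and hence the same trace norm. Recall that for non-negative operators $A$ and $B$ the root fidelity is $\sqrt{F}(A,B)=\|\sqrt{A}\sqrt{B}\|_{1}$, and that for any operator $M$ one has $\|M\|_{1}=\operatorname{Tr}\sqrt{M^{\dagger}M}=\operatorname{Tr}\sqrt{MM^{\dagger}}$, since the nonzero singular values of $M$ and $M^{\dagger}$ coincide. Equivalently, $\sqrt{F}(A,B)=\operatorname{Tr}\sqrt{\sqrt{A}\,B\sqrt{A}}=\operatorname{Tr}\sqrt{\sqrt{B}\,A\sqrt{B}}$, where the second equality uses the symmetry $F(A,B)=F(B,A)$. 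The crucial feature of these representations is that they never require the square root of the conjugated operator $W\sigma W^{\dagger}$ itself, so the fact that $\sqrt{W\sigma W^{\dagger}}\neq W\sqrt{\sigma}W^{\dagger}$ for non-unitary $W$ causes no difficulty.

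The central step is to introduce the single operator $X\equiv\sqrt{\rho}\,W\sqrt{\sigma}$ and to observe that its two Gram operators reproduce exactly the two sides of the identity. Using $X^{\dagger}=\sqrt{\sigma}\,W^{\dagger}\sqrt{\rho}$ (as $\sqrt{\rho}$ and $\sqrt{\sigma}$ are Hermitian), a direct computation gives $XX^{\dagger}=\sqrt{\rho}\,W\sigma W^{\dagger}\sqrt{\rho}$ and $X^{\dagger}X=\sqrt{\sigma}\,W^{\dagger}\rho W\sqrt{\sigma}$. Applying the first representation of the root fidelity to the left-hand side yields $\sqrt{F}(\rho,W\sigma W^{\dagger})=\operatorname{Tr}\sqrt{\sqrt{\rho}\,W\sigma W^{\dagger}\sqrt{\rho}}=\operatorname{Tr}\sqrt{XX^{\dagger}}$, while applying the second representation to the right-hand side yields $\sqrt{F}(W^{\dagger}\rho W,\sigma)=\operatorname{Tr}\sqrt{\sqrt{\sigma}\,W^{\dagger}\rho W\sqrt{\sigma}}=\operatorname{Tr}\sqrt{X^{\dagger}X}$. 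Both expressions equal $\|X\|_{1}$, which establishes the claim.

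There is essentially no analytic obstacle here; the only thing to get right is the bookkeeping that selects the appropriate representation of $\sqrt{F}$ on each side so that the common operator $X$ appears, together with the verification that every manipulation remains valid for merely non-negative (rather than strictly positive or trace-one) $\rho$ and $\sigma$. The latter holds because both the identity $\|M\|_{1}=\operatorname{Tr}\sqrt{M^{\dagger}M}=\operatorname{Tr}\sqrt{MM^{\dagger}}$ and the definition of $\sqrt{F}$ are valid for arbitrary operators and arbitrary non-negative arguments, with no invertibility assumed.
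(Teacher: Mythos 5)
Your proof is correct and complete: the representations $\sqrt{F}(A,B)=\operatorname{Tr}\sqrt{\sqrt{A}\,B\sqrt{A}}=\operatorname{Tr}\sqrt{\sqrt{B}\,A\sqrt{B}}$ are valid for arbitrary non-negative operators, the Gram computation $XX^{\dagger}=\sqrt{\rho}\,W\sigma W^{\dagger}\sqrt{\rho}$ and $X^{\dagger}X=\sqrt{\sigma}\,W^{\dagger}\rho W\sqrt{\sigma}$ with $X=\sqrt{\rho}\,W\sqrt{\sigma}$ is right, and $\operatorname{Tr}\sqrt{XX^{\dagger}}=\operatorname{Tr}\sqrt{X^{\dagger}X}=\|X\|_{1}$ holds with no invertibility or normalization assumptions. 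Note, however, that there is nothing in the paper to compare against: this lemma is merely quoted verbatim in Appendix~\ref{app:FR-lemmas} from \cite{FR14} without proof, so you have supplied the missing argument yourself; your route (reducing both sides to the trace norm of the single operator $\sqrt{\rho}\,W\sqrt{\sigma}$ via the equality of singular values of $M$ and $M^{\dagger}$) is the standard one and is essentially the proof given in \cite{FR14} itself.
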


\begin{lemma}[Lemma B.7 of \cite{FR14}]  \label{lem_fidelitydecomposition}
  Let $\rho$ and $\sigma$ be non-negative operators and let $\{W_d\}_{d \in D}$ be a family of operators such that $\sum_{d \in D} W_d = \id$. Then
  \begin{align}
    \sum_{d \in D} \sqrt{F}(W_d^{\dagger} \rho W_d, \sigma) \geq \sqrt{F}(\rho, \sigma) \ .
  \end{align}
\end{lemma}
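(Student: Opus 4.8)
The plan is to prove the inequality by lifting everything to purifications and exploiting Uhlmann's theorem together with the hypothesis $\sum_{d\in D}W_d=I$. Throughout I write the root fidelity as $\sqrt{F}(\rho,\sigma)=\|\sqrt{\rho}\sqrt{\sigma}\|_1$, and I work on $\mathcal{H}\otimes\mathcal{H}'$ with $\dim\mathcal{H}'=\dim\mathcal{H}$, using the unnormalized maximally entangled vector $|\Gamma\rangle=\sum_i|i\rangle|i\rangle$.

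First I would fix canonical purifications. Set $|\psi_\rho\rangle\equiv(\sqrt{\rho}\otimes I)|\Gamma\rangle$ and $|\psi_\sigma\rangle\equiv(\sqrt{\sigma}\otimes I)|\Gamma\rangle$, which satisfy $\operatorname{Tr}_{\mathcal{H}'}|\psi_\rho\rangle\langle\psi_\rho|=\rho$ and $\operatorname{Tr}_{\mathcal{H}'}|\psi_\sigma\rangle\langle\psi_\sigma|=\sigma$. The key observation is that $|\chi_d\rangle\equiv(W_d^{\dagger}\otimes I)|\psi_\rho\rangle$ is a purification of $W_d^{\dagger}\rho W_d$, since $\operatorname{Tr}_{\mathcal{H}'}|\chi_d\rangle\langle\chi_d|=W_d^{\dagger}\rho W_d$; crucially, every $|\chi_d\rangle$ is obtained from the \emph{single} vector $|\psi_\rho\rangle$ by acting with $W_d^{\dagger}$ on the first factor, so that $\langle\chi_d|=\langle\psi_\rho|(W_d\otimes I)$.

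Next I would invoke Uhlmann's theorem, which applies to non-negative (sub-normalized) operators as well. For the pair $(\rho,\sigma)$ it supplies a unitary $V^{*}$ on $\mathcal{H}'$ with $\langle\psi_\rho|(I\otimes V^{*})|\psi_\sigma\rangle=\sqrt{F}(\rho,\sigma)$, where the phase is absorbed into $V^{*}$ so the right-hand side is real and non-negative. Writing $|\tilde\psi_\sigma\rangle\equiv(I\otimes V^{*})|\psi_\sigma\rangle$ for this particular purification of $\sigma$, and using that Uhlmann's theorem expresses $\sqrt{F}(W_d^{\dagger}\rho W_d,\sigma)$ as the maximum overlap of the fixed purification $|\chi_d\rangle$ with purifications of $\sigma$, each term obeys $\sqrt{F}(W_d^{\dagger}\rho W_d,\sigma)\geq|\langle\chi_d|\tilde\psi_\sigma\rangle|$. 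Summing over $d$ and applying the triangle inequality gives
\begin{align}
\sum_{d\in D}\sqrt{F}(W_d^{\dagger}\rho W_d,\sigma)\ \geq\ \sum_{d\in D}|\langle\chi_d|\tilde\psi_\sigma\rangle|\ \geq\ \Big|\sum_{d\in D}\langle\chi_d|\tilde\psi_\sigma\rangle\Big|.
\end{align}
Inserting $\langle\chi_d|=\langle\psi_\rho|(W_d\otimes I)$ and using $\sum_{d\in D}W_d=I$, the inner sum collapses:
\begin{align}
\sum_{d\in D}\langle\chi_d|\tilde\psi_\sigma\rangle=\langle\psi_\rho|\Big(\sum_{d\in D}W_d\otimes I\Big)|\tilde\psi_\sigma\rangle=\langle\psi_\rho|\tilde\psi_\sigma\rangle=\sqrt{F}(\rho,\sigma),
\end{align}
which is exactly the claimed bound.

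The conceptual heart, and the only place requiring care, is the two-step reduction: expressing every purification $|\chi_d\rangle$ as $W_d^{\dagger}$ acting on the \emph{common} vector $|\psi_\rho\rangle$, and committing to one fixed purification $|\tilde\psi_\sigma\rangle$ of $\sigma$ \emph{before} summing. This is what allows $\sum_{d\in D}W_d=I$ to do the work after the triangle inequality. The remaining technical point is merely to confirm that Uhlmann's theorem holds for non-negative operators, which follows directly from the purification formula above (or by a scaling argument reducing to the density-operator case).
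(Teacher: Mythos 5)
Your proof is correct. Note that this paper does not prove the lemma itself---it is quoted verbatim from \cite{FR14} for the reader's convenience---and your argument (purify $\rho$ as $(\sqrt{\rho}\otimes I)\ket{\Gamma}$, observe that $(W_d^{\dagger}\otimes I)$ maps this \emph{common} purification to a purification of $W_d^{\dagger}\rho W_d$, fix a single Uhlmann-optimal purification of $\sigma$, then apply the triangle inequality and $\sum_{d}W_d=\id$) is precisely the standard Uhlmann-based route taken in the original proof in \cite{FR14}, including the validity of Uhlmann's theorem for unnormalized non-negative operators via $\bra{\psi_\rho}(I\otimes U)\ket{\psi_\sigma}=\tr\bigl(\sqrt{\rho}\sqrt{\sigma}\,U^{T}\bigr)$ and $\|X\|_1=\max_U|\tr(UX)|$.
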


\bibliographystyle{alpha}
\bibliography{Ref}

\end{document}